\theoremstyle{plain}  
\newtheorem{thm}{Theorem}[section] 
\newtheorem{lem}[thm]{Lemma} 
\newtheorem{prop}[thm]{Proposition}
\theoremstyle{definition} 
\newtheorem{defi}[thm]{Definition}
\newtheorem{ex}[thm]{Example}
\theoremstyle{remark}
\newtheorem*{rem}{Remark} 
\newcommand{\R}{\mathbb{R}} 
\newcommand{\N}{\mathbb{N}}
\newcommand{\E}{\mathbb{E}} 
\newcommand{\Q}{\mathbb{Q}}
\newcommand{\A}{\mathcal{A}} 
\newcommand{\cB}{\mathcal{B}} 
\newcommand{\cF}{\mathcal{F}} 
\newcommand{\cL}{\mathcal{L}} 
\newcommand{\cN}{\mathcal{N}} 
\newcommand{\cT}{\mathcal{T}}
\newcommand{\one}{\mathbbm{1}}
\newcommand{\Var}{\mathrm{var}}
\newcommand{\supp}{\operatorname{supp}}
\newcommand{\logit}{\operatorname{logit}}
\newcommand{\CRPS}{\operatorname{CRPS}}
\newcommand{\DSS}{\operatorname{DSS}}
\begin{document}

\title{Cross-calibration of probabilistic forecasts}
\author{Christof Str\"ahl\thanks{E-mail: christof.straehl@stat.unibe.ch} \and Johanna F.~Ziegel\thanks{E-mail: johanna.ziegel@stat.unibe.ch} }
\date{University of Bern, Switzerland}
\maketitle

\begin{abstract}
When providing probabilistic forecasts for uncertain future events, it is common to strive for calibrated forecasts, that is, the predictive distribution should be compatible with the observed outcomes. Several notions of calibration are available in the case of a single forecaster alongside with diagnostic tools and statistical tests to assess calibration in practice. Often, there is more than one forecaster providing predictions, and these forecasters may use information of the others and therefore influence one another. We extend common notions of calibration, where each forecaster is analysed individually, to notions of cross-calibration where each forecaster is analysed with respect to the other forecasters in a natural way. It is shown theoretically and in simulation studies that cross-calibration is a stronger requirement on a forecaster than calibration. Analogously to calibration for individual forecasters, we provide diagnostic tools and statistical tests to assess forecasters in terms of cross-calibration. The methods are illustrated in simulation examples and applied to probabilistic forecasts for inflation rates by the Bank of England.\end{abstract}

\section{Introduction}
\label{sec:crosscalibration}

In the past decades probabilistic forecast, specifying a complete predictive probability distribution for an uncertain future event, have replaced point forecasts in a number of applications including weather forecasting, climate predictions and economics; see \citet{GneitingKatzfuss2014} for a recent overview. \citet{MurphyWinkler1987,GneitingBalabdaouETAL2007} have formulated the guiding principle for a probabilistic forecast to ``Maximize sharpness subject to calibration''. Calibration refers to the statistical compatibility of the forecasts and the observations. Sharpness, on the other hand, is a property that concerns the forecast only. Roughly speaking, a forecast is sharper the more concentrated the distribution is, with point forecasts as a limiting case. \citet{GneitingBalabdaouETAL2007} have formulated their principle in order to pick the ``better'' of two calibrated forecasts. While it is generally acknowledged that forecasts should be calibrated \citep{Dawid1984,DieboldGuntherETAL1998}, it is not universally accepted that it is necessary to consider sharpness as a further criterion for forecast evaluation \citep{MitchellWallis2011}. 

In this manuscript we are concerned with calibration only. However, we consider a framework where several forecasters issue competing forecasts. We propose concepts of cross-calibration in order to formalize the influence of forecasters amongst each other and with respect to the observations. Essentially, a cross-calibrated forecaster not only uses her own information optimally but also incorporates the information of the competing forecasters in an optimal way. The notions we propose refine the existing notions of calibration of \citet{GneitingRanjan2013}. Furthermore, we extend their prediction space setting to allow for serial dependence which is the usual situation in forecasting applications. We are able to extend the result of \citet{DieboldGuntherETAL1998} of uniformity and independence of probability integral transform (PIT) values to our general framework.

Notions of cross-calibration have previously been considered  in the literature for binary or categorical outcomes. \citet{Al-NajjarWeinstein2008} consider a test which an uninformed forecaster cannot pass with high probability when an informed forecaster is present. The notion of cross-calibration by \citet{FeinbergStewart2008} takes into account that several forecasters may influence each other, and the one with the largest information set should be preferred. In this paper we generalize the cross-calibration notions of \citet{FeinbergStewart2008} to forecasts of real valued outcomes including diagnostic tools and statistical tests to assess cross-calibration in applications. 

The cross-calibration test suggested by \citet{FeinbergStewart2008} uses the
following framework, which we review here only in the case of two forecasters for simplicity. Let $\Omega = \{(\omega_t)_{t=0,1,\ldots}|\,\omega_t \in \{\,0,1\}\}$ denote the space of all possible
realizations and let $n > 4$ be an integer. Divide the interval $[ 0,1]$ into $n$ equal closed subintervals
$[0,1/n],\dots,[(n-1)/n,1]$. At time $t$, forecaster $j$, $j = 1,2$, makes a prediction which is given as an interval $I_t^j \in \{[0,1/n],\dots,[(n-1)/n,1]\}$. It gives bounds on the predictive probability that the next
realization $\omega_t$ is equal to one. The cross-calibration
test is defined over the sequence of forecast-observation triples
$(I_t^1,I_t^2,\omega_t)_{t=0}^\infty$. For any pair $\ell = (\ell_1,\ell_2) \in \{1,\ldots,n\}^2$ and any time $T$, let
\begin{displaymath}
\nu_T^\ell = \sum_{t=0}^T \one\Big(I_t^1 = \Big[\frac{\ell_1-1}{n},\frac{\ell_1}{n}\Big],I_t^2 =\Big[\frac{\ell_2-1}{n},\frac{\ell_2}{n}\Big]\Big),
\end{displaymath}
which is the number of times up to $T$, that the forecasting profile $\ell$ is chosen. For
$\nu_T^\ell > 0$, the frequency of realizations equal to one conditional on the forecasting
profile is given by
\begin{displaymath}
  f_T^\ell = \frac{1}{\nu_T^l} \sum_{t=0}^T \omega_t \one\Big(I_t^1 = \Big[\frac{\ell_1-1}{n},\frac{\ell_1}{n}\Big],I_t^2 =\Big[\frac{\ell_2-1}{n},\frac{\ell_2}{n}\Big]\Big).
\end{displaymath}
A forecaster $j$ passes the cross-calibration
test at the outcome $(I_t^1,I_t^2,\omega_t)_{t=0}^\infty$ if
\begin{displaymath}
  \limsup_{T \rightarrow \infty} \left| f_T^\ell - \frac{2\ell_j-1}{2n}\right| \le \frac{1}{2n}
\end{displaymath}
for every $\ell$ satisfying $\lim_{T \rightarrow \infty} \nu_T^\ell = \infty$.

It is shown in \citet{FeinbergStewart2008} that a forecaster who is aware of the distribution of
$(\omega_t)_{t=0}^\infty$ passes the cross-calibration test with probability one, no matter
which strategy the other forecaster uses.
From a theoretical point of view, this is an interesting result. However, testing empirically
if a forecast is cross-calibrated is rather difficult. The problem is that already if
$n=5$, there are 25 forecasting profiles to consider. For each of these profiles the
empirical frequency conditioned on that profile should lie inside the predicted interval
of the cross-calibrated forecaster. But some profiles are hardly ever predicted and
therefore the number of observations needs to be very large. We illustrate this problem with the following simple simulation example, which has been implemented in R \citep{R-Develop2008} like all further simulations in this paper. 

\begin{ex}
\label{ex:crosscalibration}
\begin{table}[t]
  \centering
  \begin{tabular}{|c|c|} \hline
T & Monte-Carlo power \\ \hline
$10^4$ & 0.112 \\
$5 \cdot 10^4$ & 0.254 \\
$10^5$ & 0.333 \\
$5 \cdot 10^5$ & 0.699 \\
$10^6$ & 0.847 \\ 
$5 \cdot 10^6$ & 0.994 \\ \hline
      \end{tabular}
  \caption{Monte-Carlo power of detecting cross-calibration for different time periods $T$.  \label{tab:crosscalibration}}
\end{table}
In this example we consider the setting of the cross-calibration test described above. Let $B_t$, $C_t$, $t = 0,\dots,T$ be independent beta random variables with parameters $(3,5)$ for $B_t$ and $(2,1.5)$ for $C_t$. We simulate a (finite) stochastic process $(\omega_t)_{t=0}^T$, where $\omega_t$ is conditionally Bernoulli distributed with probability $(B_t + C_t)/2$. Let $n=5$. The first forecaster predicts at each time $t$ the interval $I_t^1$ which contains the value $(B_t + C_t)/2$, which is the
probability that the realization $\omega_t$ is one. The second forecaster predicts the interval $I_t^2$ which contains the value $C_t$. Therefore, we expect that the first forecaster
is cross-calibrated with respect to the second forecaster and should pass the test. The first forecaster passes the test if for all forecasting profiles $\ell = (\ell_1,\ell_2)$ where $\nu_T^\ell$ is positive $f_T^\ell$ lies in $[(\ell_1-1)/n,\ell_1/n]$. In Table
\ref{tab:crosscalibration} we see the result. For several $T$  we performed
the test 1000 times. The second column of the table gives the Monte-Carlo power of the
test, that is, how often the test detected the cross-calibration of the first forecaster divided by $T$.
\end{ex}

Table \ref{tab:crosscalibration} shows that already for this rather simple
example, the sample size $T$ needs to be large in order to come close to the theoretically predicted power of one. Furthermore, the test is only applicable to probabilistic forecasts for binary outcomes. 
The goal of this paper is to extend the notion of cross-calibration to probabilistic forecasts of real valued quantities, and present methodology to empirically assess cross-calibration for serially dependent forecast-observation tuples. We have chosen to work in the framework of prediction spaces as introduced by \citet{GneitingRanjan2013}, and extend it to allow for serial dependence. 

The paper is organized as follows. In Section \ref{sec:framework} we review and extend the notion of a prediction space and generalize the notions of calibration for individual forecasters to multiple forecasters. We introduce diagnostic tools for checking cross-calibration and illustrate their usefulness in a simulation study in Section \ref{sec:diagnostic}. In Section \ref{sec:binary.outcomes} we treat the special case of binary outcomes and relate our work to the existing results of \citet{FeinbergStewart2008}. Statistical tests for cross-calibration are derived in Section \ref{sec:tests}. We analyse the Bank for England density forecasts for inflation rates in Section \ref{sec:data}. Finally, the paper concludes with a discussion in Section \ref{sec:discussion}.

\section{Notions of cross-calibration}
\label{sec:framework}

\citet{GneitingRanjan2013} introduced the notion of a \emph{prediction space} as follows.

\begin{defi}[one-period prediction space]
\label{def:prediction.space}
  Let $k \ge 1$ be an integer. A \emph{prediction space} is a probability space $(\Omega, \mathcal{A}, \Q)$
  together with sub-$\sigma$-algebras $\mathcal{A}_1,\ldots,\mathcal{A}_k \subset
  \mathcal{A}$. The elements of $\Omega$ are tuples of the form $(F_1,\ldots,F_k,Y,V)$
  such that, for $i = 1,\ldots,k$, $F_i$ is a CDF-valued random quantity that is measurable with respect to $\mathcal{A}_i$ \footnote{That is, for all finite collections $x_1,\ldots,x_n \in \mathbb{R}$, $B_1,\ldots,B_n \in \mathcal{B}(\mathbb{R})$, the event $\{F_i(x_j) \in B_j$ for $j = 1,\ldots,n \} \in \mathcal{A}_i$.}, $Y$ is a real-valued random variable, and  $V$ is a uniformly distributed random variable on $[0,1]$, independent of $\mathcal{A}_1,\ldots,\mathcal{A}_k,Y$.
\end{defi}

The integer $k$ corresponds to the number of forecasters. The $\sigma$-algebra $\mathcal{A}_i$ can be seen as the information set available to forecaster $i$. The random variable $Y$ is the observation,  the random variable $V$ is needed for technical reasons. It allows to define the probability integral transform (PIT) in Definition \ref{def:pit} below. 

We term the prediction space proposed by \citet{GneitingRanjan2013} a \emph{one-period} prediction space as it is only concerned with predictions for an outcome $Y$ at one time point. While this framework is sufficient to define various notions of calibration and cross-calibration of forecasters in principle, a statistical analysis of calibration is only possible if we can assume that we have a sequence $(F_{1,n},\dots,F_{k,n},Y_n,V_n)_{1 \le n \le N}$ of independent forecast-observation tuples. This assumption is unrealistic in most forecasting situations. Therefore, we propose to extend the prediction space setting, allowing for serial dependence as follows. 

\begin{defi}[prediction space for serial dependence]
  \label{def:prediction.space.SD}
Let $k \geq 1$ be an integer. A \emph{prediction space for serial dependence} is a probability space $(\Omega, \A, \Q )$ together with filtrations $(\A_{1,t})_{t \in \N}, \ldots (\A_{k,t})_{t \in \N} \subset \A$. The elements of $\Omega$ are sequences of tuples of the form $(F_{1,t}, \ldots, F_{k,t}, Y_{t+1}, V_t)_{t \in \N}$, where
$(Y_t)_{t \in \N}$ is a sequence of real-valued random variables, and $(V_t)_{t \in \N}$ is an iid sequence of standard uniform random variables that is independent of everything else. Let $\mathcal{T}_t = \sigma(Y_s \mid s \le t)$ be the $\sigma$-algebra generated by the observations until time $t$. For all $t \in \N$ and $i=1,\dots,k$, $F_{i,t}$ is a CDF-valued random quantity that is $\sigma(\mathcal{A}_{i,t},\mathcal{T}_t)$-measurable. We assume that, for all $t \in \N$, $m \ge 1$,
\begin{equation}\label{eq:serialass}
\mathcal{L}(Y_{t+1} \mid \mathcal{A}_{1,t+m},\dots,\mathcal{A}_{k,t+m},\mathcal{T}_t) = \mathcal{L}(Y_{t+1} \mid \mathcal{A}_{1,t},\dots,\mathcal{A}_{k,t},\mathcal{T}_t),
\end{equation}
where $\mathcal{L}(X \mid \mathcal{G})$ denotes the conditional law of a random variable $X$ with respect to the $\sigma$-algebra $\mathcal{G}$.
\end{defi}

The notation in Definition \ref{def:prediction.space.SD} is chosen such that $\A_{i,t}$ encodes the information of the $i$-th forecaster $F_{i,t}$ at time $t$ to predict the outcome $Y_{t+1}$ at the next time point. Additionally, all forecasters $F_{1,t},\dots,F_{k,t}$ have access to the past realizations of $Y_t$ in principle, that is, to the information contained in $\cT_t$. This means, we have separated the information of forecaster $F_i$ into two parts, the information of past realizations of the outcome $\cT_t$, that is available to all forecasters, and a personal information set $\A_{i,t}$ that she acquires (partially) from other sources. Condition \eqref{eq:serialass} formalizes that information from other sources about the outcome at time point $t+1+m$ should not influence the outcome $Y_{t+1}$ at time points $t+1$. A sufficient condition for \eqref{eq:serialass} to hold is that $\A_{i,t+m} = \sigma(\A_{i,t},\cB_{i,t+m})$ with $\cB_{i,t+m}$ independent of $\A_{i,t}$ and $\cT_t$.

Let us illustrate this point in the context of weather forecasting. Suppose a numerical weather prediction system is used to calculate the state of the atmosphere to help us predict temperature tomorrow. Condition \eqref{eq:serialass} means that if we let the numerical system run longer to give us also information about the atmosphere the day after tomorrow, this will have no influence on what temperature is realized tomorrow.

All further statements are within the prediction space setting and expressions such as \emph{almost surely} are with respect to the probability measure $\Q$. In the prediction space for serial dependence, $F_{i,t}$ is termed \emph{ideal} with respect to $\A_{i,t}$ if 
\begin{displaymath}
  F_{i,t} = \cL(Y_{t+1} \vert \A_{i,t},\cT_t) \quad \text{almost surely}.
\end{displaymath}
In the case of independent forecast-observation tuples, we recover the definition of an ideal forecaster of \citet{GneitingRanjan2013}, that is, in the one-period prediction space setting, $F_i$ is \emph{ideal} with respect to $\mathcal{A}_i$ if 
\[
F_i = \mathcal{L}(Y|\mathcal{A}_i) \quad \text{almost surely};
\]
see also \citet{Tsyplakov2011,Tsyplakov2013}. We generalize this notion as follows.
\begin{defi}[cross-ideal]
\label{def:cross-ideal}
In the prediction space setting for serial dependence, we call $F_{i,t}$ \emph{cross-ideal} with respect to $\A_{1,t}, \ldots ,\A_{k,t}$ if 
\begin{equation}\label{eq:crossideal}
  F_{i,t} = \mathcal{L}(Y_{t+1} \vert \mathcal{A}_{1,t}, \ldots,\mathcal{A}_{k,t}, \cT_t) \quad \text{almost surely.}
\end{equation}
\end{defi}
A cross-ideal forecaster does not only use her own information optimally but also the
information available to the other forecasters. In fact, at time $t$, her information $\mathcal{A}_{i,t}$ contains all relevant information of all the forecasters because $F_{i,t}$ is $\sigma(\mathcal{A}_{i,t},\cT_t)$-measurable and hence by \eqref{eq:crossideal}, also $\mathcal{L}(Y_{t+1}|\mathcal{A}_{1,t}, \ldots,\mathcal{A}_{k,t}, \cT_t)$ is $\sigma(\mathcal{A}_{i,t},\cT_t)$-measurable, implying that $\mathcal{L}(Y_{t+1}|\mathcal{A}_{1,t}, \ldots,\mathcal{A}_{k,t}, \cT_t) = \mathcal{L}(Y_{t+1}|\mathcal{A}_{i,t}, \cT_t)$. Therefore, each cross-ideal forecaster is ideal, whereas the converse does not hold in general; see Examples \ref{ex:cross-ideal} and \ref{ex:marginal.calibration}. The above argument shows more generally the following proposition. 
\begin{prop}\label{prop:sub-cross-ideal}
For some $t \in \N$, let $F_{1,t},\dots,F_{k,t}$ be forecasters with information sets $\mathcal{A}_{1,t},\dots,\mathcal{A}_{k,t}$ in a prediction space for serial dependence. If $F_{1,t}$ is cross-ideal with respect to $\mathcal{A}_{1,t},\dots,\mathcal{A}_{k,t}$, then it is also cross-ideal with respect to $\mathcal{A}_{1,t},\mathcal{A}_{i_2,t},\dots,\mathcal{A}_{i_m,t}$,  where $\{i_2,\dots,i_m\}\subset \{2,\dots,k\}$.
\end{prop}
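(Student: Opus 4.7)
The plan is to reduce the statement to a direct application of the tower property of conditional distributions, exploiting the measurability of $F_{1,t}$ with respect to $\sigma(\A_{1,t},\cT_t)$ that is baked into Definition \ref{def:prediction.space.SD}. Concretely, set
\[
\mathcal{H} = \sigma(\A_{1,t},\A_{2,t},\dots,\A_{k,t},\cT_t), \qquad \mathcal{G} = \sigma(\A_{1,t},\A_{i_2,t},\dots,\A_{i_m,t},\cT_t),
\]
so that $\sigma(\A_{1,t},\cT_t) \subset \mathcal{G} \subset \mathcal{H}$. The cross-ideal hypothesis reads $F_{1,t} = \mathcal{L}(Y_{t+1}\mid \mathcal{H})$ almost surely, while the goal is to show $F_{1,t} = \mathcal{L}(Y_{t+1}\mid \mathcal{G})$ almost surely.

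First I would record the measurability input: by Definition \ref{def:prediction.space.SD} the forecast $F_{1,t}$ is $\sigma(\A_{1,t},\cT_t)$-measurable, hence a fortiori $\mathcal{G}$-measurable. Next, I would apply the tower property set-wise. For each Borel set $B \subset \R$, the cross-ideal hypothesis yields $\Q(Y_{t+1}\in B \mid \mathcal{H}) = F_{1,t}(B)$ almost surely, and the tower property together with $\mathcal{G}$-measurability of $F_{1,t}(B)$ gives
\[
\Q(Y_{t+1}\in B \mid \mathcal{G}) = \EQ\!\left[\Q(Y_{t+1}\in B \mid \mathcal{H}) \,\middle|\, \mathcal{G}\right] = \EQ[F_{1,t}(B)\mid \mathcal{G}] = F_{1,t}(B) \quad \text{a.s.}
\]
Finally, I would upgrade this from a pointwise-in-$B$ statement to the claimed equality of CDF-valued random quantities by fixing a countable determining family (for instance, $B = (-\infty, q]$ with $q \in \Q$) and taking the intersection of the corresponding almost-sure events; right-continuity of CDFs extends the agreement to every $x \in \R$.

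I do not anticipate a serious obstacle. The only subtlety is the null-set bookkeeping in the last step, ensuring that the $\Q$-null exceptional set is independent of $B$, which is handled cleanly by the countable-$\Q$ argument together with the right-continuity and monotonicity of cumulative distribution functions. Note that the serial-dependence assumption \eqref{eq:serialass} is not invoked: the result is a purely measure-theoretic consequence of conditioning on a coarser $\sigma$-algebra that still contains the $\sigma$-algebra with respect to which $F_{1,t}$ is measurable.
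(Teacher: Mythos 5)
Your proposal is correct and follows essentially the same route as the paper: the paper's argument (given in the text immediately preceding the proposition) likewise observes that $F_{1,t}$, being $\sigma(\A_{1,t},\cT_t)$-measurable, makes $\mathcal{L}(Y_{t+1}\mid \A_{1,t},\dots,\A_{k,t},\cT_t)$ measurable with respect to the coarser $\sigma$-algebra, so the tower property collapses the conditioning onto any intermediate $\sigma$-algebra containing $\sigma(\A_{1,t},\cT_t)$. Your version merely spells out the set-wise application and the countable determining family, which the paper leaves implicit.
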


For clarity, we have chosen to illustrate the notions of cross-ideal forecasters (or cross-calibrated forecasters; see Definition \ref{def:cross.calibration}) with independent forecast-observation tuples, or, in other words, in the one-period prediction space setting of \citet{GneitingRanjan2013} dropping the time index $t$. This is natural, as the notions of calibration are essentially one-period concepts, and make no use of assumption \eqref{eq:serialass}. The purpose of assumption \eqref{eq:serialass} will become clear in Theorem \ref{thm:serial.dependence} below where we generalize the result of \citet{DieboldGuntherETAL1998} on uniformity and independence of PIT values.

\begin{ex}\label{ex:cross-ideal} Let $\nu$ be uniformly distributed on $(5,20)$ and, conditionally on $\nu$, let $\sigma^2$ have an inverse chi-squared distribution with $\nu$ degrees of freedom. Conditional on $\nu$ and $\sigma$, the outcome $Y$ is normally distributed with mean zero and variance $\sigma^2$, and we consider two forecasters, a normally distributed forecaster $F_{1} = \cN(0,\sigma^2)$ and a t-distributed forecaster $F_{2} = t_{\nu}$. This example is constructed such that $F_1$ has the full information about the distribution of the outcome $Y$, whereas $F_2$ only knows the prior distribution of $\sigma^2$. We have that $F_1$ and $F_2$ are both ideal with respect to to their information sets $\mathcal{A}_1 = \sigma(\sigma^2)$ and $\mathcal{A}_2 = \sigma(\nu)$, respectively, but only $F_1$ is cross-ideal with respect to $\mathcal{A}_1,\mathcal{A}_2$. 

More specifically, the predictive density function $f_1(\cdot|\sigma^2)$ of $F_1$ is a normal density with variance $\sigma^2$, and the predictive density function $f_2$ of $F_2$ is
\begin{equation}\label{eq:f2}
  f_2(x|\nu) = \int_0^{\infty} f_1(x|s) g(s|\nu) \, ds = \frac{\Gamma\left(\frac{\nu+1}{2}\right)}{\sqrt{\nu\pi}\Gamma(\nu/2)}
  \left(1 + \frac{x^2}{\nu}\right)^{-\frac{\nu +1}{2}},
\end{equation}
where $g(\cdot|\nu)=(\nu/2)^{\nu/2}s^{\nu/2-1}\exp\{-\nu/(2s)\}/\Gamma(\nu/2)$ is the density function of an inverse chi-squared distribution with $\nu$ degrees of freedom. The right hand side of \eqref{eq:f2} is the density of a t-distribution. Equation \eqref{eq:f2} holds because for a normal likelihood with known mean, the inverse chi-squared distribution is a conjugate prior of a t-distributed posterior distribution. Therefore, we see that $F_1$ is cross-ideal with respect to $\mathcal{A}_1,\mathcal{A}_2$. It is clear that $F_2$ is not cross-ideal with respect to $\mathcal{A}_1,\mathcal{A}_2$. We will come back to this example throughout the paper. 
\end{ex}

The most prominent diagnostic tool for checking calibration empirically is the probability integral transform (PIT) \citep{Dawid1984,DieboldGuntherETAL1998}. 

\begin{defi}[PIT]
\label{def:pit} Let $F$ be a (possibly random) CDF, $X$ be a random variable and $V$ a standard uniform random variable independent of $F$ and $X$. We define
  \[Z^X_{F} = F(X -) + V  \{F(X)-F(X -)\}, \]
where $F(y -) = \lim_{x \uparrow y}F(x)$.
In the prediction space setting, the random variable $Z_{i,t} := Z_{F_{i,t}}^{Y_{t+1}}$ is called the \emph{probability integral transform (PIT)} of the $i$-th forecaster $F_{i,t}$.
\end{defi}
The PIT $Z_{i,t}$ is a random variable with values in $[0,1]$. If $F$ is deterministic and $X \sim F$, then $Z_F^X$ is uniformly distributed and $F^{-1}(Z_F^X) = X$ almost surely, where $F^{-1}$ is the quantile function of $F$; see for example \citet{Ruschendo2009}. Based on the PIT we introduce the following notions of cross-calibration.

\begin{defi}[cross-calibration]
\label{def:cross.calibration}
Let $F_{1,t},\dots,F_{k,t}$ be forecasters in a prediction space for serial dependence. Let $\{i_1,\dots,i_m\} \subset \{1,\dots,k\}$.
  \begin{enumerate}
 \item  \label{def:prob.cross.cal}
The forecast $F_{1,t}$ is \emph{cross-calibrated} with respect to $F_{i_1,t}$, \dots, $F_{i_m,t}$ if
\begin{displaymath}
  \mathcal{L}(Z_{1,t}|\, F_{i_1,t}, \dots, F_{i_m,t}, \cT_t) = \mathcal{U}([0,1]), \quad \text{almost surely,}
\end{displaymath}
where $\mathcal{U}([0,1])$ denotes a uniform distribution on $[0,1]$.\footnote{To be precise, the left hand side is a Markov kernel $\kappa:\Omega \times \mathcal{B}(\mathbb{R}) \to [0,1]$, which is required to be constant in $\omega \in \Omega$ and equal to the Lebesgue measure on $[0,1]$.}
 \item \label{def:marginal.cross-calibration}
For $1 \le j \le k$, $F_{1,t}$ is \emph{marginally cross-calibrated} with respect to $F_{j,t}$ if
\begin{displaymath}
\E_{\Q}{F_{j,t}(y)} = \E_{\Q}{\one\{F_{j,t}^{-1}(Z_{1,t}) \le y\}},
\end{displaymath}
for all $y \in \mathbb{R}$.
\end{enumerate}
\end{defi}

For brevity, we sometimes speak of cross-calibration with respect to $\{i_1,\dots,i_m\}$ instead of $F_{i_1,t}$, \dots, $F_{i_m,t}$.
Our definitions are natural generalizations of the notions of calibration for individual forecasters in \citet[Definition 2.6]{GneitingRanjan2013}, which we recall here fore ease of comparison.

\begin{defi}[calibration]
\label{def:calibration.dispersion}
Let $F$ be a forecaster in a one-period prediction space.
  \begin{enumerate}
    \item \label{def:prob.calibration}
    The forecast $F$ is \emph{probabilistically calibrated} if $Z_F$ is uniformly
    distributed on $[0,1]$.
    \item \label{def:marginal.calibration}
    The forecast $F$ is \emph{marginally calibrated} if
    $\E_{\Q}{F(y)} = \Q(Y \le y)$
    for all $y \in \R$.
  \end{enumerate}
\end{defi}

In part \ref{def:marginal.calibration} of Definition \ref{def:calibration.dispersion}
the left-hand side of the equation depends only on the distribution of the forecast, whereas the right-hand side depends only on the distribution of the observation. Marginal calibration therefore assesses whether the average forecast distribution is equal to the marginal distribution of $Y$. If $F_{1,t}$ is marginally cross-calibrated with respect to $F_{j,t}$, then, on average, the PIT $Z_{1,t}$ of $F_{1,t}$ behaves like a standard uniform random variable when considered in view of $F_{j,t}$. Intuitively, this means that $F_{1,t}$ has enough information about $F_{j,t}$ and the observation $Y_{t+1}$ to disguise itself as uniform on average when viewed through the eyes of $F_{j,t}$. 

Probabilistic cross-calibration means that the PIT $Z_{1,t}$ of $F_{1,t}$ is uniformly distributed no matter what the other forecasters predict. In contrast, probabilistic calibration of $F_i$ means that $Z_{F_i}$ is uniformly distributed on average over all possible predictions of the other forecasters, which is a weaker notion.

\begin{rem}
\citet{GneitingRanjan2013} also formalize the concept of dispersion in their Definition 2.6 in terms of the variance of $Z_F$. It is possible to define a notion of cross-dispersion for multiple forecasters considering the conditional variance of $Z_{1,t}$ given $F_{i_1,t}$, \dots, $F_{i_m,t}$. However, we feel that formulating dispersion in terms of the variance of $Z_F$ is not as natural as it seems at first sight. If $F$ is probabilistically calibrated then $Z_F$ is uniformly distributed on $[0,1]$, therefore its variance is $1/12$ and $F$ is called neutrally dispersed. In this case, $\Phi^{-1}(Z_F)$, would have a standard normal distribution, where $\Phi^{-1}$ denotes the quantile function of the standard normal distribution. It is equally intuitive to define dispersion in terms of the variance of $\Phi^{-1}(Z_F)$ with over- and underdispersion if this variance is smaller or larger than one, respectively. If a random variable $X$ with values in $[0,1]$ has variance $1/12$,  generally, it does not follow that $\Phi^{-1}(X)$ has unit variance. For example, let $X$ be a beta distributed random variable with parameters $\alpha=1$ and $\beta = (\sqrt{33}-5)/2$. Then $X$ has variance $1/12\approx 0.083$. The variance of $\Phi^{-1}(X)$ is approximately $1.92 \not= 1$. Therefore, it may well be that a forecast is neutrally dispersed with respect to $Z_F$ but over- or underdispersed with respect to $\Phi^{-1}(Z_F)$. Due to this ambiguity, we do not consider the concept of (cross-)dispersion in this manuscript. 
\end{rem}

The following theorem formally connects Definitions \ref{def:cross.calibration} and \ref{def:calibration.dispersion} showing that the former is indeed a generalization of the latter.

\begin{thm}
  \label{thm:cross.individual}
Consider forecasters $F_{1,t},\dots,F_{k,t}$ in a prediction space for serial dependence.
\begin{enumerate}
\item
The forecast $F_{1,t}$ is marginally cross-calibrated with respect to itself, if and only if $F_{1,t}$ is marginally calibrated.
\item 
If $F_{1,t}$ is cross-calibrated with respect to $F_{i_1,t}, \ldots, F_{i_m,t}$, then $F_{1,t}$ is
cross-calibrated with respect to any subset of $\{i_1,\dots,i_m\}$. In particular, $F_{1,t}$ is cross-calibrated with respect to the empty set $\emptyset$, that is, probabilistically calibrated.
\item If $F_{1,t}$ is cross-calibrated with respect to $F_{2,t}$, then it is also marginally cross-calibrated with respect to $F_{2,t}$. 
\end{enumerate}
\end{thm}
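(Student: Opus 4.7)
Plan. Parts (2) and (3) reduce to tower-property and quantile--CDF-duality arguments; Part (1) depends on the randomized-PIT identity and is where the real work lies. I would prove them in that order.

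For (2), cross-calibration is the kernel identity $\Q(Z_{1,t}\in B\mid F_{i_1,t},\ldots,F_{i_m,t},\cT_t)=\mathrm{Leb}(B)$ almost surely for every Borel $B\subset[0,1]$. For any sub-collection $\{j_1,\ldots,j_\ell\}\subset\{i_1,\ldots,i_m\}$ the generated $\sigma$-algebra is smaller, so the tower property applied to $\Q(Z_{1,t}\in B\mid F_{j_1,t},\ldots,F_{j_\ell,t},\cT_t)$ collapses the inner conditional probability to the deterministic value $\mathrm{Leb}(B)$, yielding cross-calibration with respect to the sub-collection. Setting $\ell=0$ leaves only $\cT_t$ in the conditioning, which is the serial-dependence analogue of probabilistic calibration; a further outer expectation reduces to the unconditional form of \citet{GneitingRanjan2013}.

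For (3), the quantile--CDF duality $F^{-1}(z)\le y\iff z\le F(y)$ yields $\Q(F_{2,t}^{-1}(Z_{1,t})\le y)=\Q(Z_{1,t}\le F_{2,t}(y))$. Because $F_{2,t}(y)\in[0,1]$ is $\sigma(F_{2,t},\cT_t)$-measurable, the cross-calibration hypothesis $\mathcal{L}(Z_{1,t}\mid F_{2,t},\cT_t)=U([0,1])$ implies $\Q(Z_{1,t}\le F_{2,t}(y)\mid F_{2,t},\cT_t)=F_{2,t}(y)$ almost surely; an outer expectation then produces $\E[F_{2,t}(y)]$, which is the defining equation for marginal cross-calibration of $F_{1,t}$ with respect to $F_{2,t}$.

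For (1), both marginal calibration and marginal cross-calibration with respect to itself carry $\E[F_{1,t}(y)]$ on the left-hand side, so the equivalence reduces to the distributional identity $F_{1,t}^{-1}(Z_{1,t})\stackrel{d}{=}Y_{t+1}$. I would derive this from the randomized-PIT identity $F^{-1}(Z_F^X)=X$ almost surely recorded after Definition \ref{def:pit} (\citet{Ruschendo2009}), applied after disintegrating over $\sigma(F_{1,t},\cT_t)$; the uniform randomizer $V_t$ in the PIT construction is precisely what makes the identity pass through atoms of $F_{1,t}$ at $Y_{t+1}$. The subtlety --- and what I expect to be the main technical obstacle --- is the case in which $Y_{t+1}$ lies inside a flat region of $F_{1,t}$, where the pathwise equality $F_{1,t}^{-1}(Z_{1,t})=Y_{t+1}$ can fail and the distributional equality must nevertheless be pushed through by careful handling of the atom/flat-region decomposition of the quantile function.
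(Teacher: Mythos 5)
Your treatments of parts (2) and (3) are correct and are essentially the paper's own: part (2) is the tower property applied to the coarser conditioning $\sigma$-algebra (with the empty collection giving probabilistic calibration), and part (3) is the duality $F_{2,t}^{-1}(z)\le y\iff z\le F_{2,t}(y)$ followed by conditioning on $\sigma(F_{2,t},\cT_t)$ and taking an outer expectation.

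The gap is in part (1). You correctly reduce the equivalence to showing $\Q\{F_{1,t}^{-1}(Z_{1,t})\le y\}=\Q(Y_{t+1}\le y)$ for all $y$, but the tool you invoke cannot deliver this. The identity $F^{-1}(Z_F^X)=X$ a.s.\ recorded after Definition \ref{def:pit} carries the hypothesis $X\sim F$; disintegrating over $\sigma(F_{1,t},\cT_t)$ makes the forecast a fixed CDF but does \emph{not} make the conditional law of $Y_{t+1}$ equal to $F_{1,t}$ --- that would be auto-calibration, and part (1) assumes nothing about the relation between $F_{1,t}$ and $Y_{t+1}$ (it is an equivalence of two moment conditions that must hold for arbitrary, possibly badly miscalibrated, forecasters). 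Without $X\sim F$ the identity is simply false (take $F_{1,t}$ concentrated away from the support of $Y_{t+1}$: then $F_{1,t}^{-1}(Z_{1,t})$ need not even share the support of $Y_{t+1}$), so there is no distributional equality left to ``push through.'' What does hold, and what the paper proves, is an event-level statement requiring no distributional hypothesis: $\{F_{1,t}^{-1}(Z_{1,t})\le y\}=\{Z_{1,t}\le F_{1,t}(y)\}$; since $Z_{1,t}$ lies in $[F_{1,t}(Y_{t+1}-),F_{1,t}(Y_{t+1})]$ (a.s.\ in the open interval when the endpoints differ) while $F_{1,t}(y)$ can never lie strictly inside that interval, this event coincides a.s.\ with $\{F_{1,t}(Y_{t+1})\le F_{1,t}(y)\}$, which is then identified with $\{Y_{t+1}\le y\}$. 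Your instinct that flat regions of $F_{1,t}$ are the delicate point is right, but the delicacy sits in this last identification $\{F_{1,t}(Y_{t+1})\le F_{1,t}(y)\}=\{Y_{t+1}\le y\}$ (which tacitly excludes $F_{1,t}$ being constant across $[y,Y_{t+1}]$ with positive probability), not in an atom-by-atom repair of the pathwise quantile identity. To fix part (1), replace the appeal to the Rüschendorf identity by the direct computation of $\Q\{Z_{1,t}\le F_{1,t}(y)\}$ just described.
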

\begin{proof}
To show the first claim, observe that we have for all $y \in \R$,
\begin{align*}
  \E_{\Q}{\one\{F_{1,t}^{-1}(Z_{1,t}) \le y\}} &= \Q[F_{1,t}(Y_{t+1}-) + V\{F_{1,t}(Y_{t+1}) -
    F_{1,t}(Y_{t+1}-)\} \le F_{1,t}(y)]\\
  &= \Q\{F_{1,t}(Y_{t+1}) \le F_{1,t}(y)\}\\
  & = \Q(Y_{t+1} \le y).
\end{align*}
The second equality holds, because 
\begin{displaymath}
Z_{1,t} = F_{1,t}(Y_{t+1}-) + V\{F_{1,t}(Y_{t+1}) -
F_{1,t}(Y_{t+1}-)\} \in [F_{1,t}(Y_{t+1}-),F_{1,t}(Y_{t+1})],
\end{displaymath}
 where the interval consists of the point $F_{1,t}(Y_{t+1}-) = F_{1,t}(Y_{t+1})$ if $F_{1,t}$ is
 continuous at the point $Y_{t+1}$, and $Z_{1,t}\in (F_{1,t}(Y_{t+1}-),F_{1,t}(Y_{t+1}))$
 almost surely, otherwise. Furthermore, $F_{1,t}(y) \le F_{1,t}(Y_{t+1}-)$ or $F_{1,t}(y) \ge F_{1,t}(Y_{t+1})$. Let $J \subset \{i_1,\dots,i_m\}$.
The second claim follows because, for $y \in (0,1)$,
\begin{align*}
\Q(Z_{1,t} \le y \mid F_{i,t}, i \in J, \cT_t) &= \E_{\Q}\{\Q(Z_{1,t} \le y|\, F_{i_1}, \ldots, F_{i_m}, \cT_t)\mid F_{i,t}, i \in J, \cT_t\}\\
& = \E_{\Q}(y\mid F_{i,t}, i \in J, \cT_t) = y
\end{align*}
by the definition of cross-calibration. The last claim holds because \[\E_{\Q}\one\{F_{2,t}^{-1}(Z_{1,t}) \le y\} = \E_{\Q}\Q\{Z_{1,t} \le F_{2,t}(y)\mid F_{2,t},\cT_t\} = \E_{\Q}F_{2,t}(y).\]
\end{proof}

It is possible that a forecaster is marginally calibrated but not probabilistically calibrated; see \citet[Example 2.4]{GneitingRanjan2013} which we take up below in Example \ref{ex:marginal.calibration} to illustrate cross-calibration. Conversely, the last claim of Theorem \ref{thm:cross.individual} shows that marginal cross-calibration with respect to a different forecaster is a necessary condition for cross-calibration.

\citet{Tsyplakov2011,Tsyplakov2013} introduced a slightly more restrictive notion than an ideal forecaster, which is an \emph{auto-calibrated} forecaster, that is, it fulfils $\mathcal{L}(Y \mid F) = F$, almost surely, in the one-period prediction space setting. Generally, an auto-calibrated forecaster is ideal with respect to $\sigma(F)$, which is the $\sigma$-algebra generated by $F$. \citet{GneitingRanjan2013} contend that it is unlikely that empirical test of auto-calibration are feasible, except for very special circumstances such as forecasts for binary random variables. In cases where forecasters are restricted to specific classes of distributions \citet{HeldRufibachETAL2010} have taken on the challenge to derive statistical tests for ideal forecasters in the sense of auto-calibration based on a score regression approach; for earlier work in this direction see \citet{Hamill2001,MasonGalpinETAL2007}. In Section \ref{sec:cross.ideal.test} we show that it is possible to extend the score regression approach of \citet{HeldRufibachETAL2010} to test for cross-calibrated forecasters, that is, for cross-ideal forecasters with respect to $\sigma(F_1),\dots,\sigma(F_k)$; compare Proposition \ref{prop:pccci}.

In this paper, we challenge the statement of \citet{GneitingRanjan2013} by proposing two powerful tests for cross-calibration under very general assumptions that are justified even under serial dependence; see Sections \ref{sec:condex} and \ref{sec:prob.cross.cal.test}. Note that the following Proposition \ref{prop:pccci} shows that auto-calibration is in fact a special case of cross-calibration. 

\begin{prop}\label{prop:pccci}
Consider forecasters $F_{1,t},\dots,F_{k,t}$ in a prediction space for serial dependence. Let $\{i_1,\dots,i_m\} \subset \{1,\dots,k\}$. Then, the following are equivalent:
\begin{enumerate}
\item The forecaster $F_{1,t}$ is cross-calibrated with respect to $F_{i_1,t},\dots,F_{i_m,t}$.
\item For all $z \in [0,1)$, conditional on $F_{i_1,t},\dots,F_{i_m,t},\cT_t$, the random variable $\one\{Z_{1,t} \le z\}$ is Bernoulli distributed with parameter $z$. 
\end{enumerate}
If $1 \in \{i_1,\dots,i_m\}$, then part one and two are equivalent to $F_{1,t}$ being cross-ideal with respect to $\sigma(F_{i_1,t}),\dots,\sigma(F_{i_m,t})$.
\end{prop}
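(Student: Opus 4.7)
The plan is to establish the two equivalences in turn. First I would treat $(1)\Leftrightarrow(2)$ as essentially a restatement of the definition of uniformity via the conditional CDF, and then I would handle the additional characterization through cross-ideality under the hypothesis $1\in\{i_1,\dots,i_m\}$.

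Set $\mathcal{G}_t=\sigma(F_{i_1,t},\dots,F_{i_m,t},\cT_t)$. For $(1)\Leftrightarrow(2)$, cross-calibration in the sense of Definition~\ref{def:cross.calibration} means that the Markov kernel $\cL(Z_{1,t}\mid\mathcal{G}_t)$ equals the Lebesgue measure on $[0,1]$ almost surely. Since a probability measure on $[0,1]$ is determined by its CDF evaluated on any countable dense set, this is equivalent to $\Q(Z_{1,t}\le z\mid\mathcal{G}_t)=z$ almost surely for every $z\in[0,1)$, which is exactly the assertion that $\one\{Z_{1,t}\le z\}$ is conditionally Bernoulli with parameter $z$. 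The only care needed is to interchange ``for each $z$, almost surely'' with ``almost surely, for each $z$,'' which is standard via a countable dense argument together with monotonicity of the conditional CDF.

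For the equivalence with cross-ideality, assume without loss of generality that $i_1=1$, so that $F_{1,t}$ is $\mathcal{G}_t$-measurable. For the implication cross-ideal $\Rightarrow(1)$, conditional on $\mathcal{G}_t$ the variable $Y_{t+1}$ has distribution $F_{1,t}$ by assumption, and $V_t$ is independent of $\mathcal{G}_t$ and $Y_{t+1}$ by the construction of the prediction space for serial dependence; the classical randomized PIT fact then yields that $Z_{1,t}=F_{1,t}(Y_{t+1}-)+V_t\{F_{1,t}(Y_{t+1})-F_{1,t}(Y_{t+1}-)\}$ is conditionally $\mathcal{U}([0,1])$. For the converse, I would invoke the observation already used in the proof of Theorem~\ref{thm:cross.individual} that $\{Y_{t+1}\le y\}$ and $\{Z_{1,t}\le F_{1,t}(y)\}$ coincide up to a $\Q$-null set. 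Since $F_{1,t}(y)$ is $\mathcal{G}_t$-measurable and the conditional law of $Z_{1,t}$ is Lebesgue on $[0,1]$ (hence assigns probability $s$ to $(-\infty,s]$ for any $\mathcal{G}_t$-measurable $s\in[0,1]$), this gives $\Q(Y_{t+1}\le y\mid\mathcal{G}_t)=\Q(Z_{1,t}\le F_{1,t}(y)\mid\mathcal{G}_t)=F_{1,t}(y)$ almost surely. Taking $y$ over a countable dense set of $\R$ then yields $\cL(Y_{t+1}\mid\mathcal{G}_t)=F_{1,t}$ almost surely, i.e.\ cross-ideality.

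The main obstacle is really only bookkeeping: making the ``almost surely, for all $z$'' (respectively $y$) quantifier swap rigorous, and justifying the substitution of the random $\mathcal{G}_t$-measurable threshold $F_{1,t}(y)$ into the conditional-uniformity statement. Both are routine via countable dense subsets and the Markov-kernel formulation of the conditional distribution; no tool beyond what is already used in the proof of Theorem~\ref{thm:cross.individual} is needed.
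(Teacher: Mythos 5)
Your proposal is correct and follows essentially the same route as the paper: the equivalence of (1) and (2) is treated as a direct restatement of conditional uniformity, and the cross-ideality claim is obtained from the a.s.\ identification of $\{Y_{t+1}\le y\}$ with $\{Z_{1,t}\le F_{1,t}(y)\}$ together with the $\mathcal{G}_t$-measurability of $F_{1,t}(y)$ when $1\in\{i_1,\dots,i_m\}$. The only difference is cosmetic: you spell out the converse direction (cross-ideal implies cross-calibrated) via the conditional randomized-PIT fact, which the paper leaves implicit.
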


\begin{proof} The equivalence of parts one and two is immediate from the definition of cross-calibration. Suppose now that $1 \in \{i_1,\dots,i_m\}$. For all $y \in \R$, we obtain
\begin{multline*}
\Q(Y_{t+1} \le y |\, F_{i_1,t},\dots,F_{i_m,t}, \cT_t) = \Q\{F_{1,t}^{-1}(Z_{1,t})\le y|\,F_{i_1,t},\dots,F_{i_m,t}, \cT_t\} \\= \Q\{Z_{1,t} \le F_{1,t}(y)|\,F_{i_1,t},\dots,F_{i_m,t}, \cT_t\} = F_{1,t}(y),
\end{multline*}
which shows that last claim.
\end{proof}

We conclude this section with the announced generalization of the result of \citet{DieboldGuntherETAL1998} on uniformity and independence of PIT values in a prediction space for serial dependence.

\begin{thm}
  \label{thm:serial.dependence} Suppose we are in the prediction space setting for serial dependence. Let $\{i_1,\dots,i_m\}\subset \{1,\dots,k\}$ and assume that $F_{1,t} = \mathcal{L}(Y_{t+1} \vert \A_{i_1,t}, \ldots, \A_{i_m,t}, \mathcal{T}_t)$ for all $t \in \N$. Then, for all $l \in \N_0$, we have
\begin{displaymath}
  \mathcal{L}(Z_{1,t}, \ldots, Z_{1,t+l} \vert \A_{i_1,t+l}, \ldots, \A_{i_m,t+l}) = \mathcal{U}([0,1])^{\otimes (l+1)}, \quad \text{almost surely,}
\end{displaymath}
for all $t \in \N$. Here, $\mathcal{U}([0,1])^{\otimes (l+1)}$ denotes the distribution of $l+1$ independent standard uniform random variables. 
\end{thm}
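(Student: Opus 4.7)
The plan is to induct on $l$. Write $\mathcal{H}_s = \sigma(\mathcal{A}_{i_1,s},\dots,\mathcal{A}_{i_m,s})$ and $\mathcal{G}_s = \sigma(\mathcal{H}_s, \cT_s)$ throughout. The base case $l = 0$ is a direct consequence of the cross-ideal assumption: conditional on $\mathcal{G}_t$ one has $Y_{t+1} \sim F_{1,t}$ with $V_t$ an independent uniform, so the standard probability-integral-transform argument for random CDFs gives $\mathcal{L}(Z_{1,t} \mid \mathcal{G}_t) = \mathcal{U}([0,1])$; an iterated conditional expectation yields $\mathcal{L}(Z_{1,t}\mid\mathcal{H}_t)=\mathcal{U}([0,1])$.

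For the inductive step, I would condition on the enriched $\sigma$-algebra $\mathcal{G}_{t+l} \vee \sigma(V_t,\dots,V_{t+l-1})$. Under this enrichment, $Z_{1,s}$ for $s \le t+l-1$ is a deterministic function of the conditioned data (since $F_{1,s}$ is $\mathcal{H}_s\vee\cT_s$-measurable, $Y_{s+1}$ is $\cT_{s+1}$-measurable, and $V_s$ has been conditioned), whereas $Z_{1,t+l}$ depends only on the fresh inputs $Y_{t+l+1}$ and $V_{t+l}$. By the cross-ideal assumption at time $t+l$ together with the independence of $V_{t+l}$, the PIT $Z_{1,t+l}$ is uniform and independent of everything already in the conditioning. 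Multiplying by test functions $f_0,\dots,f_l$ and then taking the tower conditional expectation onto $\mathcal{H}_{t+l}$ reduces the target identity to
\[
\E\bigl[f_0(Z_{1,t}) \cdots f_{l-1}(Z_{1,t+l-1}) \,\big|\, \mathcal{H}_{t+l}\bigr] \;=\; \overline{f_0} \cdots \overline{f_{l-1}}, \qquad \overline{f_s} := \int_0^1 f_s(u)\,du.
\]

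I would then peel off the surviving terms by a backward induction on $s$ from $t+l-1$ down to $t$. At step $s$, the independence of $V_s$ and the PIT identity turn $f_s(Z_{1,s})$ into the constant $\overline{f_s}$, provided the key claim $\mathcal{L}(Y_{s+1}\mid\mathcal{H}_{t+l},\cT_s) = F_{1,s}$ holds. To obtain it, one writes, via the tower property,
\[
\mathcal{L}(Y_{s+1} \mid \mathcal{H}_{t+l}, \cT_s) \;=\; \E\bigl[\,\mathcal{L}\bigl(Y_{s+1} \mid \sigma(\mathcal{A}_{1,t+l},\dots,\mathcal{A}_{k,t+l}), \cT_s\bigr) \,\big|\, \mathcal{H}_{t+l}, \cT_s\bigr],
\]
applies \eqref{eq:serialass} to replace the time index $t+l$ by $s$ inside the inner conditional law, and then uses the cross-ideal identity $F_{1,s} = \mathcal{L}(Y_{s+1}\mid\mathcal{G}_s)$ to collapse the resulting expectation down to $F_{1,s}$.

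The main obstacle is precisely this last collapse inside the key claim: \eqref{eq:serialass} is phrased in terms of the full forecaster filtration $\sigma(\mathcal{A}_{1,\cdot},\dots,\mathcal{A}_{k,\cdot})$, whereas the conclusion conditions only on the subset $\mathcal{H}_{t+l}$. The argument must exploit that information acquired between times $s$ and $t+l$ cannot retroactively sharpen the predictive law of $Y_{s+1}$ — exactly the non-anticipativity content of \eqref{eq:serialass} — and combine this with the cross-ideal identity and the tower property so that the conditional law given the subset-filtration $\mathcal{H}_{t+l}$ collapses back to $F_{1,s}$. Once this measurability bookkeeping is handled cleanly, the backward sweep returns $\prod_{s=0}^{l} \overline{f_s}$, completing the induction.
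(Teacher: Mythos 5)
Your overall strategy --- conditioning so that all but the last PIT factor become measurable, peeling off the last factor with the one-step PIT identity, and then sweeping backwards over $s$ --- is exactly the structure of the paper's proof, which evaluates $\E\{\one(Z_{1,t}\le u_0)\cdots\one(Z_{1,t+l}\le u_l)\mid\cB_{t+l}\}$ with $\cB_u=\sigma(\A_{i_1,u},\dots,\A_{i_m,u})$ by iterated conditioning on $\sigma(\cB_{t+l},\cT_s)$. Your base case, the treatment of the auxiliary uniforms $V_s$, and the reduction to the key claim $\cL(Y_{s+1}\mid\mathcal{H}_{t+l},\cT_s)=F_{1,s}$ are all sound. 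The gap is that you never prove the key claim: you name it, sketch a route, and then concede that the ``last collapse'' is the main obstacle. That concession is where the argument stops being a proof, and the route you sketch genuinely does not close. Towering through the full collection and applying \eqref{eq:serialass} leaves you with $\E[\cL(Y_{s+1}\mid\A_{1,s},\dots,\A_{k,s},\cT_s)\mid\mathcal{H}_{t+l},\cT_s]$; this equals $F_{1,s}$ when the outer conditioning is $\sigma(\mathcal{H}_s,\cT_s)$ (tower property plus the hypothesis on $F_{1,s}$), but $\mathcal{H}_{t+l}$ is strictly larger than $\mathcal{H}_s$, and nothing forces $\E[X\mid\mathcal{H}_{t+l},\cT_s]=\E[X\mid\mathcal{H}_s,\cT_s]$ for an $X$ that is merely $\sigma(\A_{1,s},\dots,\A_{k,s},\cT_s)$-measurable. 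Indeed, if a forecaster $j\notin\{i_1,\dots,i_m\}$ already knows at time $s$ a quantity $W$ correlated with $Y_{s+1}$, and forecaster $i_1$ only acquires $W$ between times $s$ and $t+l$, then \eqref{eq:serialass} for the full collection can hold while conditioning on $\mathcal{H}_{t+l}$ genuinely sharpens the law of $Y_{s+1}$ beyond $F_{1,s}$; so your worry cannot be dismissed as bookkeeping.

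The paper closes this step by applying the non-anticipativity condition directly at the level of the sub-collection: it writes $\E\{\one(Z_{1,s}\le u)\mid\cB_{t+l},\cT_s\}=\E\{\one(Z_{1,s}\le u)\mid\cB_s,\cT_s\}=u$, i.e., it reads \eqref{eq:serialass} with $\cB_u$ in place of the full collection $\sigma(\A_{1,u},\dots,\A_{k,u})$, which is precisely your key claim in one stroke (this subset form is what the displayed sufficient condition $\A_{i,t+m}=\sigma(\A_{i,t},\cB_{i,t+m})$ with independent innovations is meant to deliver for every sub-collection). Strictly speaking this is a mild strengthening of \eqref{eq:serialass} as literally stated, so your instinct that something extra is being used is correct; but the repair is not further measurability analysis along your detour through the full collection. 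It is to invoke the subset form of \eqref{eq:serialass} to replace $\mathcal{H}_{t+l}$ by $\mathcal{H}_s$ in the conditional law of $Y_{s+1}$ directly, after which your backward sweep returns $\prod_{s}\overline{f_s}$ as intended.
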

\begin{proof} We define $\cB_t:= \sigma(\A_{i_1,t}, \dots, \A_{i_m,t})$. For $u = (u_0,\dots,u_l) \in (0,1)^{l+1}$, we obtain
\begin{align*}
\mathbb{E}&\{\one(Z_{1,t} \le u_0)\cdots\one(Z_{1,t+l} \le u_l)\mid \cB_{t+l}\}\\
&=\mathbb{E}\big[\one(Z_{1,t} \le u_0)\cdots\one(Z_{1,t+l-1} \le u_{l-1})\mathbb{E}\{\one(Z_{1,t+l} \le u_l)\mid \cB_{t+l},\mathcal{T}_{t+l}\}\mid \cB_{t+l}\big]\\
&=\mathbb{E}\big[\one(Z_{1,t} \le u_0)\cdots\one(Z_{1,t+l-2} \le u_{l-2})\E\{\one(Z_{1,t+l-1} \le u_{l-1})\mid \cB_{t+l},\cT_{t+l-1}\}\mid \cB_{t+l}\big]u_l\\
&=\mathbb{E}\big[\one(Z_{1,t} \le u_0)\cdots\one(Z_{1,t+l-2} \le u_{l-2})\E\{\one(Z_{1,t+l-1} \le u_{l-1})\mid \cB_{t+l-1},\cT_{t+l-1}\}\mid \cB_{t+l}\big]u_l\\
&=\mathbb{E}\big[\one(Z_{1,t} \le u_0)\cdots\one(Z_{1,t+l-2} \le u_{l-2})\mid \cB_{t+l-1}\big]u_{l-1}u_l = \dots = u_0\cdots u_l,
\end{align*}
where we used condition \eqref{eq:serialass} to obtain the third equality, and then proceeded iteratively.
\end{proof}

\begin{rem}\label{rem:qstep}
If we consider $q$-step ahead forecasts for some $q \ge 2$, then the above result continues to hold for all vectors of the form
\[
(Z_{1,t},Z_{1,t+q},\dots,Z_{1,t+mq}).
\]
However, there may be dependence amongst $(Z_{1,t},Z_{1,t+1},\dots,Z_{1,t+q-1})$, which complicates matters when testing for cross-calibration. This problem also arises in tests for uniformity and independence of PIT values as suggested by \citet{DieboldGuntherETAL1998}. Several approaches to deal with this issue have been suggested in the literature; see \citet{Knuppel2015} and references therein. In this paper, we restrict our attention to cross-calibration of one-period ahead forecasts but extensions to $q$-step ahead forecasts would certainly be of great interest.
\end{rem}

\section{Diagnostic plots for assessing cross-calibration}\label{sec:diagnostic}

\citet{GneitingBalabdaouETAL2007} suggest to assess marginal calibration based on a plot of the empirical analogue of the difference
\[
\E_{\Q}(F_t(y)) - \Q(Y_{t+1} \le y), \quad \text{for $y \in \R$.}
\]
Analogously, to assess marginal cross-calibration, the empirical version of 
\begin{equation}\label{eq:mar-cal-plot}
  \E_{\Q}{F_{j,t}(y)} - \E_{\Q}{\one\{F_{j,t}^{-1}(Z_{i,t}) \le y\}}, \quad \text{for $y \in \R$,}
\end{equation}
can be plotted. If the graph is not equal to zero everywhere one can deduce that $F_{t,i}$ is not marginally cross-calibrated with respect to $F_{j,t}$ and therefore also not cross-calibrated with respect to $\{j\}$ by Theorem \ref{thm:cross.individual}. If the graph is zero everywhere, then we have marginal cross-calibration. However, this does not necessarily imply that we have a cross-calibrated forecaster. 

Probabilistic calibration is often checked empirically by plotting a histogram of $Z_{i,t}$, the so-called PIT-histogram. Generally, it is not obvious how to check cross-calibration empirically. However, in many situations of practical interest it can be done by borrowing the idea of considering forecasting profiles as in the cross-calibration test of \citet{FeinbergStewart2008}. Suppose that the forecasters $F_1,\dots,F_k$ pick predictions from some parametric class of distributions $\mathcal{F} = \{F_{\lambda} \;|\; \lambda \in \Lambda\}$, where $\Lambda \subset \mathbb{R}^d$. Then we can identify each forecaster $F_{i,t}$ with the parameter $\lambda_{i,t}$ she predicts. We observe a sample $(F_{1,t},\dots,F_{k,t},Y_{t+1},V_t)$ for $1 \le t \le N$.  
Let $\Lambda_1,\dots,\Lambda_p$ be a partition of the parameter space. For a diagnostic plot showing if $F_{1,t}$, say, is cross-calibrated with respect to $\{i_1,\dots,i_m\}$, we can sort the observations into $pm$ bins according to the predicted values $(\lambda_{i_1,t},\dots,\lambda_{i_m,t})$. Then a PIT-histogram of $Z_{1,t}$ can be plotted for each bin. Clearly, the number of bins needs to be small in relation to the number of observations. 

We illustrate these diagnostic tools with two examples. The first one has been proposed by \citet[Examples 2.4]{GneitingRanjan2013}; see also \citet{GneitingBalabdaouETAL2007}.

\begin{ex}
\label{ex:marginal.calibration}
Let $\mu$ be standard normally distributed, which we denote by $\mu \sim \mathcal{N}(0,1)$. Conditional on $\mu$, the outcome is $Y\sim \mathcal{N}(\mu,1)$. Let $\tau$ take the values 1 or -1 with equal probability, independent of $Y$ and $\mu$. 
We consider four forecasters $F_1,\dots,F_4$ of different skill, whose properties are summarized in Table \ref{tab:cross.ideal}. 

It is clear that the perfect forecaster $F_1$ is cross-calibrated with respect to $F_1,F_2,F_3,F_4$. It is straight forward to check that the climatological forecaster $F_2$ is not cross-calibrated with respect to any of $F_1,F_3,F_4$ but with respect to itself. As $F_2$ is deterministic, this corresponds to the fact that $F_2$ is ideal with respect to the trivial $\sigma$-algebra. As the sign-reversed forecaster $F_4$ is not probabilistically calibrated it cannot be cross-calibrated. The cross-calibration of $F_3$ with respect to $F_1,F_2,F_4$ is shown in Appendix \ref{appendix:marg.cross.cal}. 
The statements about marginal cross-calibration in Table \ref{tab:cross.ideal} are consequences of Theorem \ref{thm:cross.individual}.


In Figure \ref{fig:marg.cross-calibration} the differences given at \eqref{eq:mar-cal-plot} are plotted. More precisely, the random variables are simulated 10'000 times and the mean is given. Recall, that all for all simulation examples we are using independent forecast-observation tuples for reasons of simplicity. In this example, it is easy to see that $F_1$ is superior to $F_2$ using the notion of marginal cross-calibration, which was not the case using only the calibration notions of \citet[Definition 2.6]{GneitingRanjan2013}; see \citet{GneitingBalabdaouETAL2007}.

As an example for checking cross-calibration empirically we note that all four forecaster are in the class of distribution functions $\mathcal{F} = \bigl\{F_\lambda \vert \lambda = (\mu,\sigma,\tau) \in \R \times (0,\infty) \times \{-1,0,1\}\bigr\}$ for $F_\lambda = \frac{1}{2} \{ \mathcal{N}(\mu,\sigma) + \mathcal{N}(\mu + \tau, \sigma)\}$. We plotted the PIT-histograms of $Z_{2}$ and $Z_{3}$ conditional on the four bins $\mu \in I_j$ for $1 \leq j \leq 4$ with $I_1 = (- \infty,-0.67)$, $I_2 = [-0.67,0)$, $I_3 = [0,0.67)$, $I_4 = [0.67,\infty)$. The PIT-histograms in Figure \ref{fig:ProbCrossCalGn} confirm that $F_3$ is probabilistic cross-calibrated with respect to $F_1,F_2,F_4$. On the other hand, $F_2$ is clearly not probabilistic cross-calibrated with respect to any set of the other forecasters.
\end{ex}

\begin{table}
\begin{center}
\begin{tabular*}{0.77\textwidth}{|@{\extracolsep{\fill} }l|l|l|} \hline
  Forecaster     & Predictive distribution  & Information set  \\ \hline
  Perfect        & $F_1 = \mathcal{N}(\mu,1)$ & $\mathcal{A}_1 = \sigma(\mu)$ \\
  Climatological & $F_2 = \mathcal{N}(0,2)$ & $\mathcal{A}_2 = \{\emptyset,\Omega\}$ \\
  Unfocused      & $F_3 = \frac{1}{2}\{\mathcal{N}(\mu,1) + \mathcal{N}(\mu + \tau,1)\}$ &
                   $\mathcal{A}_3 = \sigma(\mu, \tau)$ \\
  Sign-reversed  & $F_4 = \mathcal{N}(-\mu,1)$ & $\mathcal{A}_4 = \sigma(\mu)$ \\ \hline
\end{tabular*}\vspace{1pt}
\begin{tabular*}{0.77\textwidth}{|@{\extracolsep{\fill} }l|l|p{0.8cm}p{0.8cm}p{0.8cm}p{0.8cm}|}
\hline
Forecaster & Cross-calibration &\multicolumn{4}{l|}{Marginal cross-calibration wrt}\\
&  & $F_1$ & $F_2$ & $F_3$ & $F_4$ \\\hline
Perfect & wrt $F_1,F_2,F_3,F_4$ & yes & yes & yes & yes\\
Climatological & wrt $F_2$ & no & yes & no & no\\
Unfocused &  wrt $F_1,F_2,F_4$ & yes & yes & no & yes \\
Sign-reversed &  no & no & no & no & yes\\\hline
\end{tabular*}
\caption{Properties of the forecasters of \citet[Example 2.4]{GneitingRanjan2013}. Further details are given in Example \ref{ex:marginal.calibration}. Cross-calibration with respect to (wrt) $F_2$ is equivalent to cross-calibration with respect to $\emptyset$, that is, probabilistic calibration.\label{tab:cross.ideal}}
\end{center}
\end{table}

\begin{figure}
  \centering
  \includegraphics[scale=0.9]{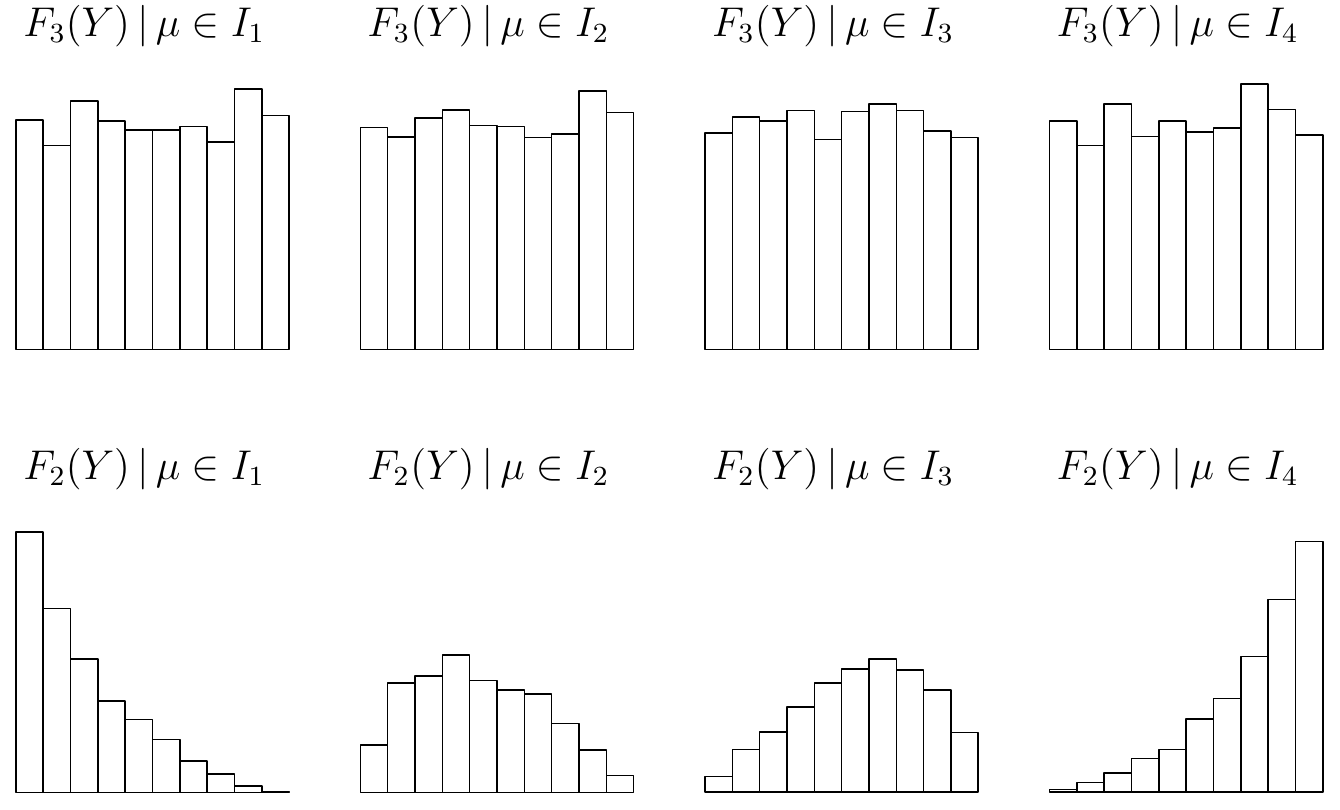}
  \caption{PIT-histogram plots of $F_3$ and $F_2$ conditional on the mean in the top and bottom row, respectively.\label{fig:ProbCrossCalGn}}
\end{figure}

\begin{figure}
  \centering
  \includegraphics[width=0.8\textwidth]{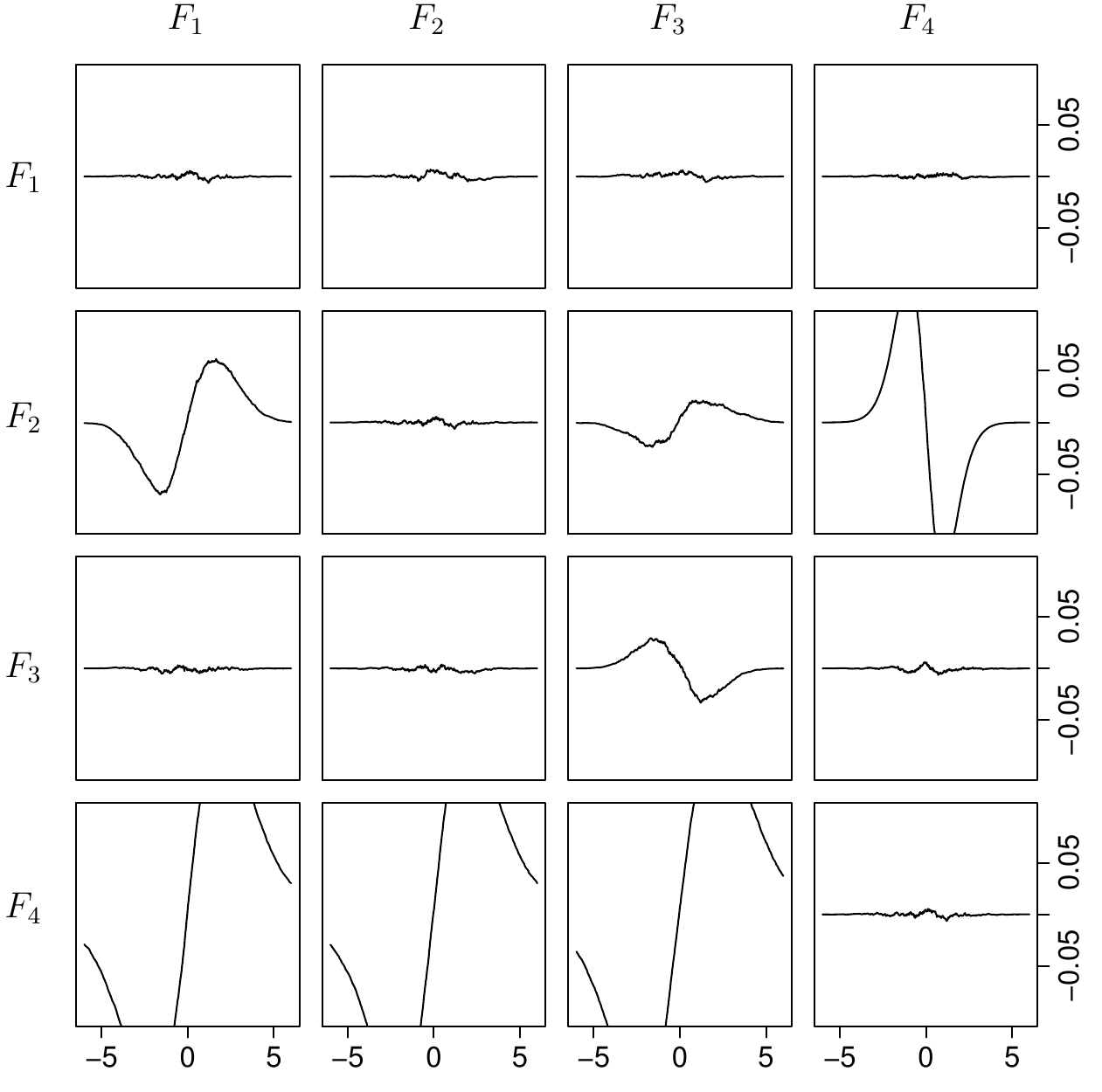}
  \caption{Marginal cross-calibration plots of the forecasters in Example \ref{ex:marginal.calibration} with 10'000 simulations. In the i-th row and j-th column the empirical version of Equation (\ref{eq:mar-cal-plot}) is plotted to assess whether $F_i$ is marginally cross-calibrated with respect to $F_j$ or not.\label{fig:marg.cross-calibration}}
\end{figure}

\begin{ex}[Example \ref{ex:cross-ideal} continued]\label{ex:2.5}
Coming back to the forecasters $F_1$ and $F_2$ of Example \ref{ex:cross-ideal}, Theorem \ref{thm:cross.individual} implies that $F_1$ is marginally cross-calibrated with respect to itself and with respect to $F_2$. Furthermore, $F_2$ is marginally cross-calibrated with respect to itself but $F_2$ is not marginally cross-calibrated with respect to $F_1$. Marginal cross-calibration plots for this scenario using 10'000 and 100'000 simulations are given in Figure \ref{fig:mar-cross-ideal}. In this example, the lack of marginal cross-calibration can only be detected for an unrealistically large number of observations. 

PIT-histograms for assessing cross-calibration of $F_1$ with respect to $F_2$  and of $F_2$  with respect to $F_1$ for 10'000 simulations are given in Figure \ref{fig:prob-cross-ideal}. The partition of the parameter space is chosen such that in each histogram there are around the same amount of observations. The lack of cross-calibration of $F_2$ with respect to $F_1$ is clearly detected. 

\end{ex}
\begin{figure}
	\centering
	\includegraphics{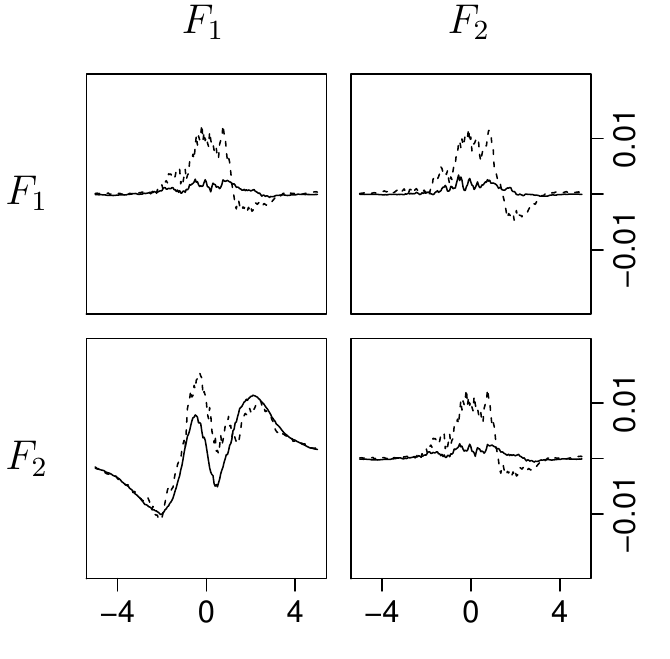}
	\caption{Marginal cross-calibration plots for the scenario in Example \ref{ex:cross-ideal} for 10'000 simulation indicated by the dashed lines and 100'000 simulations indicated by the continuous lines. The i-th row and j-th column corresponds to the empirical version of Equation (\ref{eq:mar-cal-plot}) in order to deduce whether $F_i$ is marginally cross-calibrated with respect to $F_j$ or not.\label{fig:mar-cross-ideal}}
\end{figure}
	\begin{figure}
	\centering
	\includegraphics[scale=0.9]{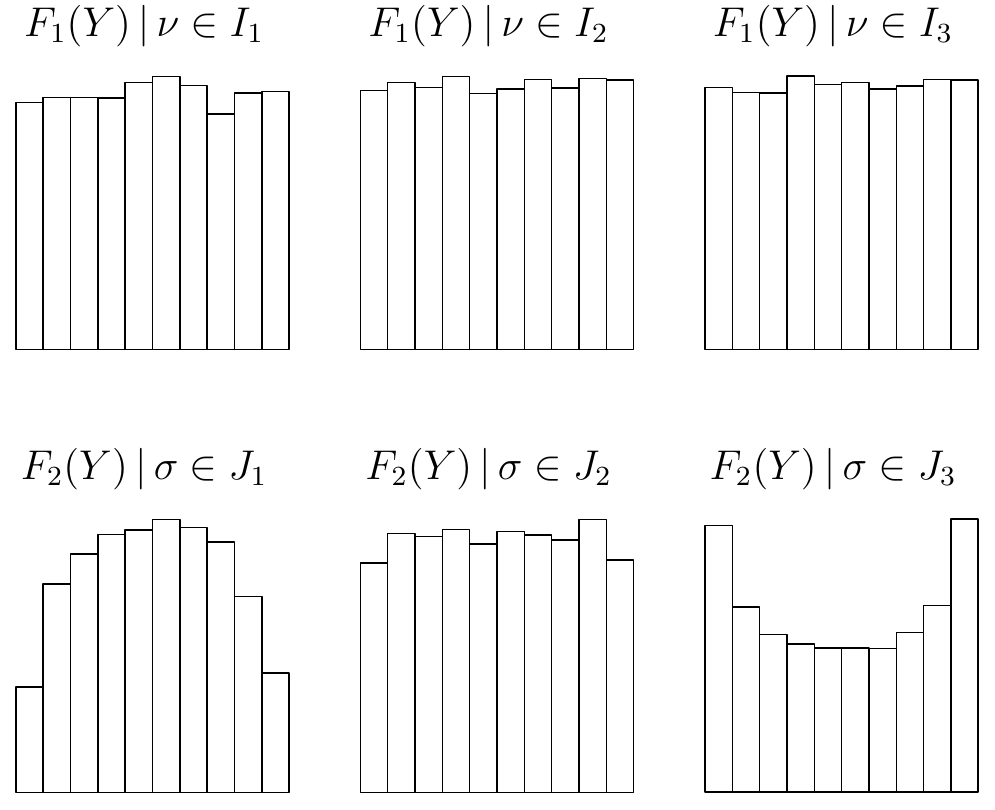}
	\caption{PIT-histogram plots of $F_1$ conditional on the predictive degree of freedom of $F_2$, where $I_1 = (5,10], I_2 = (10,15], I_3 = (15,20]$ in the top row, and PIT-histogram plots of $F_2$ conditional on the predictive standard deviation of $F_1$,  where $J_1 = [0,0.95], J_2 = (0.95,1.1], J_3 = (1.1,\infty]$ in the bottom row.\label{fig:prob-cross-ideal}}
\end{figure}

\section{Binary outcomes}
\label{sec:binary.outcomes}

In this section we consider the case, when the observation $Y$ only takes two values, zero and one. We interpret $Y = 1$ as a success and $Y = 0$ as a failure. A forecaster $F$ is
then represented by her predictive success probability $p$, such that the predictive CDF is
$F(y) = p \cdot \one\{y \ge 1\} + (1-p) \cdot \one\{y \ge 0\}$. We
identify $F$ with $p$, where $p$ is a random variable taking values in $[0,1]$.

In the case of an individual forecaster $F$ for a binary outcome it has been shown in
\citet[Theorem 2.11]{GneitingRanjan2013} that the notions of a probabilistically calibrated forecaster $F$ and an ideal forecaster relative to the $\sigma$-algebra generated by the predictive probability $p$ are equivalent. Furthermore, both notions coincide with the notion of \emph{conditional calibration}, that is $\mathbb{Q}(Y = 1|\,p) = p$. This result carries over to the notions of cross-calibration of multiple forecasters introduced in this paper. As the notions of calibration are essentially only concerned with one prediction period, we have chosen to present the results of this section in the one-period prediction space setting of Definition \ref{def:prediction.space} for simplicity.
\begin{thm}
  \label{thm:binary.outcomes}
Consider the one-period prediction space setting with binary outcome $Y$ and forecasts
$F_1,\dots,F_k$ represented by their predictive success probabilities
$p_1,\dots,p_k$, respectively. Then the following statements are equivalent:
  \begin{enumerate}
    \item \label{thm:bo.prob} The forecast $p_1$ is cross-calibrated
      with respect to $p_2,\dots, p_k$, that is $\mathcal{L}(Z_{p_1}|\,p_2,\dots, p_k)$ is standard uniform.
    \item \label{thm:bo.cond} The forecast $p_1$ is conditionally cross-calibrated
      with respect to $p_1,\dots,p_k$, that is $\Q(Y=1|\,p_1,\dots,p_k) = p_1$.
    \item \label{thm:bo.ideal} The forecast $p_1$ is ideal relative to $\sigma(p_1,\dots,p_k)$.
  \end{enumerate}
\end{thm}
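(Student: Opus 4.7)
My plan is to establish $(2)\Leftrightarrow(3)$ by a direct translation of definitions, then prove the two implications $(2)\Rightarrow(1)$ and $(1)\Rightarrow(2)$, using throughout the explicit form of the PIT in the binary case. Since $F_1$ is the Bernoulli distribution with success probability $p_1$, the independence of $V$ from $(Y,p_1,\ldots,p_k)$ yields
\[
Z_{p_1}=\begin{cases} V(1-p_1), & Y=0,\\ (1-p_1)+Vp_1, & Y=1, \end{cases}
\]
so, conditional on $(p_1,\ldots,p_k)$, $Z_{p_1}$ has piecewise-constant density $(1-q)/(1-p_1)$ on $[0,1-p_1]$ and $q/p_1$ on $(1-p_1,1]$, where $q:=\Q(Y=1\mid p_1,\ldots,p_k)$; the boundary cases $p_1\in\{0,1\}$ will be treated separately.

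For $(2)\Leftrightarrow(3)$, the conditional law of the $\{0,1\}$-valued $Y$ is entirely determined by its conditional success probability. Hence $\cL(Y\mid p_1,\ldots,p_k)$ coincides with $F_1$ almost surely if and only if $\Q(Y=1\mid p_1,\ldots,p_k)=p_1$ almost surely, which is precisely the equivalence of being ideal relative to $\sigma(p_1,\ldots,p_k)$ and being conditionally cross-calibrated.

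For $(2)\Rightarrow(1)$, condition $(2)$ forces $q=p_1$, and a short computation from the explicit form above shows that the conditional CDF of $Z_{p_1}$ given $(p_1,\ldots,p_k)$ is the identity on $[0,1]$; the tower property then delivers $\cL(Z_{p_1}\mid p_2,\ldots,p_k)=\mathcal{U}([0,1])$, which is $(1)$. For the converse $(1)\Rightarrow(2)$, let $\mu$ be a regular conditional distribution of $p_1$ given $(p_2,\ldots,p_k)$, and write $q(u)$ for the value of $q$ when $p_1=u$. By $(1)$, for almost every realization of $(p_2,\ldots,p_k)$ and every $w\in[0,1]$,
\[
\int_{[0,w]}\frac{1-q(u)}{1-u}\,\mu(du)+\int_{(w,1]}\frac{q(u)}{u}\,\mu(du)=1.
\]
The map $w\mapsto\int_{[0,w]}\{(1-q(u))/(1-u)-q(u)/u\}\,\mu(du)$ must therefore be constant on $[0,1]$, so the integrand vanishes $\mu$-almost everywhere. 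The identity $(1-q(u))/(1-u)=q(u)/u$ rearranges to $q(u)=u$, yielding $\Q(Y=1\mid p_1,\ldots,p_k)=p_1$ almost surely.

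I expect the main obstacle to be the final step: one must choose a jointly measurable version of $q$, pass from the equality in $(1)$ to the integral representation above simultaneously for $\Q$-almost every $(p_2,\ldots,p_k)$, and then exploit the arbitrariness of $w$ to pin down $q$ pointwise. The boundary cases $p_1\in\{0,1\}$ must also be dispatched separately, since the density formula degenerates there; these are easy (for instance, $p_1=0$ means $F_1$ is a point mass at $0$, so $(1)$ immediately forces $Y=0$ almost surely on $\{p_1=0\}$, giving $q=0=p_1$).
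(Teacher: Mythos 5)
Your proposal is correct and follows essentially the same route as the paper: the equivalence of (2) and (3) is definitional, (2)/(3) implies (1) via the explicit PIT density and the tower property, and the converse is obtained by integrating the conditional density of $Z_{p_1}$ given $\mathbf{p}$ against the law of $p_1$ and showing that uniformity forces the signed measure with density $\{q(u)-u\}/\{u(1-u)\}$ to vanish, exactly as in the paper's Lemma \ref{lemma:density} and subsequent argument (your ``constant in $w$'' formulation is the antiderivative of the paper's ``equal density at $z$ and $z+\delta$'' formulation). The technical caveats you flag --- boundary cases $p_1\in\{0,1\}$, measurable versions of $q$, and the null-set quantifiers --- are precisely the ones the paper also handles or glosses over at the same level of rigor.
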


The proof of Theorem \ref{thm:binary.outcomes} parallels the proof of \citet[Theorem 2.11]{GneitingRanjan2013}. The following lemma gives a formula for the density function of $Z_{p_1}$ conditional on $p_1 = x_1, \ldots, p_k = x_k$. 
\begin{lem}
  \label{lemma:density}
The density function of $Z_{p_1}$ conditional on $\mathbf{p} = \mathbf{x}$ is given by
\begin{displaymath}
  u(z|\, \mathbf{p}=\mathbf{x}) = \frac{q(\mathbf{x})}{x_1} \; \one(1-x_1 \le z \le 1) +
  \frac{1-q(\mathbf{x})}{1-x_1} \; \one(0 \le z < 1-x_1),
\end{displaymath}
where $\mathbf{p} = (p_1,\dots,p_k)$, $\mathbf{x} = (x_1,\dots,x_k) \in
[0,1]^{k}$, $x_1 \in (0,1)$ and $q(\mathbf{x}) = \Q(Y=1|\, \mathbf{p} = \mathbf{x})$.
\end{lem}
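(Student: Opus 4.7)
The plan is to unpack the PIT formula in the two possible realisations of $Y$ and combine them via the law of total probability. In the binary case the CDF $F_1$ has only two jumps, so $F_1(0-)=0$, $F_1(0)=1-x_1$, $F_1(1-)=1-x_1$, $F_1(1)=1$ whenever $p_1 = x_1$. Substituting into the definition $Z_{p_1} = F_1(Y-) + V\{F_1(Y) - F_1(Y-)\}$ yields $Z_{p_1} = (1-x_1)\, V$ on the event $\{Y=0\}$ and $Z_{p_1} = (1-x_1) + x_1\, V$ on the event $\{Y=1\}$.

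Next, I would invoke the independence of $V$ from everything else. Since each $p_i$ is $\mathcal{A}_i$-measurable, $\mathbf{p}$ is $\sigma(\mathcal{A}_1,\dots,\mathcal{A}_k)$-measurable, hence $V$ is independent of $(\mathbf{p}, Y)$. Therefore, conditional on $\{\mathbf{p} = \mathbf{x}, Y=0\}$, the random variable $Z_{p_1} = (1-x_1)V$ is uniform on $[0, 1-x_1]$, with density $1/(1-x_1)$ on that interval; conditional on $\{\mathbf{p} = \mathbf{x}, Y=1\}$, $Z_{p_1}$ is uniform on $[1-x_1, 1]$, with density $1/x_1$. The assumption $x_1 \in (0,1)$ is used here to ensure both conditional densities are well defined.

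Finally, the law of total probability gives, for any Borel set $B \subset [0,1]$,
\begin{equation*}
\Q(Z_{p_1} \in B \mid \mathbf{p} = \mathbf{x}) = q(\mathbf{x})\,\Q(Z_{p_1} \in B \mid \mathbf{p} = \mathbf{x}, Y=1) + \{1-q(\mathbf{x})\}\,\Q(Z_{p_1} \in B \mid \mathbf{p} = \mathbf{x}, Y=0),
\end{equation*}
and inserting the two conditional densities computed above yields precisely the stated formula for $u(z \mid \mathbf{p} = \mathbf{x})$.

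There is no real obstacle; the only subtle point is justifying that $V$ remains independent of $Y$ after conditioning on $\mathbf{p} = \mathbf{x}$, which follows from the prediction-space assumption that $V$ is independent of $\mathcal{A}_1,\dots,\mathcal{A}_k$ and $Y$.
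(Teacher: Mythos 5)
Your proposal is correct and follows essentially the same route as the paper: decompose $Z_{p_1}$ into the two cases $Y=0$ and $Y=1$, use the independence of $V$ from $(\mathbf{p},Y)$ to identify the two conditional uniform laws, and combine them with the law of total probability weighted by $q(\mathbf{x})$. The only cosmetic difference is that the paper computes the conditional CDF and then differentiates, whereas you write down the conditional densities directly.
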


\begin{proof}[Proof of Lemma \ref{lemma:density}]
The PIT of $p_1$ is
\begin{displaymath}
  Z_{p_1} = \begin{cases} (1-p_1) + p_1 V, & \mbox{if }Y = 1,\\ (1-p_1)V, & \mbox{if }
    Y=0. \end{cases} 
\end{displaymath}
Let $0 \le z \le 1$, then
  \begin{align*}
    \Q(Z_{p_1} \le z\,|\, \mathbf{p}=\mathbf{x}) &= \Q\{(1-p_1) + p_1 V \le z, Y=1|\,
    \mathbf{p}=\mathbf{x}\}
    \\&\quad + \Q\{(1-p_1) V \le z,Y=0\,|\,\mathbf{p}=\mathbf{x}\}\\
    & = \Q\{(1-p_1) + p_1 V \le z\,|\, Y=1,\mathbf{p}=\mathbf{x}\} \Q(Y=1|\,
    \mathbf{p}=\mathbf{x})\\ 
    & \quad + \Q\{(1-p_1) V \le z\,|\, Y=0, \mathbf{p}=\mathbf{x}\}
    \Q(Y=0|\,\mathbf{p}=\mathbf{x})\\
    &= \frac{z + x_1 -1}{x_1}  q(\mathbf{x})  \one(1-x_1 \le z) + \{1-q(\mathbf{x})\} \one(1-x_1 \le z) \\
    &\quad + \frac{z}{1-x_1}\{1-q(\mathbf{x})\}  \one\{1-x_1 >z\}\\
    &= \frac{1-q(\mathbf{x})}{1-x_1}  z  \one(1-x_1 > z) + \left\{1-
      \frac{q(\mathbf{x})}{x_1} + \frac{q(\mathbf{x})}{x_1} z \right\}
    \one(1-x_1 \le z).
  \end{align*}
Differentiation yields the claim.
\end{proof}

\begin{proof}[Proof of Theorem \ref{thm:binary.outcomes}]
  It is easy to see that part two is equivalent to part three. By Theorem \ref{thm:cross.individual}, part three implies part one. The remaining task is to prove that part one implies part two. Let $\mathbb{H} = \mathbf{p}(\Q)$ be the marginal law of the random vector $\mathbf{p}$ under $\Q$. Recall that $q(\mathbf{x}) = \Q(Y=1|\, \mathbf{p} = \mathbf{x})$. If $\mathbb{H}(\{0\} \times [0,1]^{k-1}) > 0$, then $q(0,x_2,\ldots, x_k) = 0$ for all $ (x_2, \ldots, x_k) \in [0,1]^{k-1}$, because 
\begin{align*}
  \mathbb{H}(\{0\} \times [0,1]^{k-1}) &= \Q\{\mathbf{p}^{-1}(\{0\}
  \times [0,1]^{k-1})\} = \Q\{p_1^{-1}(0)\} = \Q(p_1 = 0),
\end{align*}
and furthermore,
\begin{align*}
  \Q(Z_{F_1} = 1 \,|\, p_2 = x_2,\ldots, p_k = x_k) &\ge \Q(Z_{F_1} = 1, Y = 1, p_1 = 0 \,|\, p_2 = x_2,\ldots,p_k = x_k)\\
  & = \Q(Y = 1, p_1 = 0 \,|\, p_2 = x_2,\ldots, p_k = x_k)\\
  & = \Q(Y = 1\,|\, p_1 = 0, p_2 = x_2,\ldots, p_k = x_k) \, \Q(p_1 = 0)\\
  &  = q(0,x_2,\ldots, x_k) \, \Q(p_1 = 0).
\end{align*}
We know that  $\Q(Z_{p_1} = 1|\, p_2 = x_2, \ldots, p_k = x_k) = 0$, because
$\mathcal{L}(Z_{p_1}|\, p_2,\ldots, p_k)$ is standard uniform. This implies that $q(0,x_2,
\ldots, x_k) = 0$. Similarly one can show that $\mathbb{H}(\{1\} \times [0,1]^{k-1}) >
0$ implies $q(1,x_2,\ldots,x_k) = 1$ for all $(x_2,\ldots,x_k) \in [0,1]^{k-1}$.

Using that $\mathcal{L}(Z_{p_1}|p_2 , \ldots, p_k)$ is a standard uniform
distribution and Lemma \ref{lemma:density}, we have for a.a.~$z \in (\,0,1)$, $\delta \in (\, 0, 1-z)$
\begin{align*}
  0 &= u(z+\delta|\,p_2=x_2, \ldots, p_k=x_k) - u(z|\, p_2=x_2, \ldots p_k=x_k)\\
    &= \int_{[0,1]} \left\{ u(z + \delta|\, \mathbf{p}=\mathbf{x})\, - u(z|\, \mathbf{p} =
      \mathbf{x}) \right\} d \mathbb{H}_1(x_1)\\
    &= \int_{[1-z-\delta,1-z)} \frac{q(\mathbf{x})-x_1}{x_1(1-x_1)}\, d \mathbb{H}_1(x_1),
\end{align*}
where $\mathbb{H}_1 = p_1(\Q)$ is the marginal law of $p_1$ under $\Q$. We define the
signed measure $\mu$ for a given $(x_2, \ldots, x_k) \in [0,1]^{k-1}$ as 
\begin{displaymath}
  \mu(A) = \int_A \frac{q(\mathbf{x})-x_1}{x_1(1-x_1)}\, d \mathbb{H}_1(x_1),
\end{displaymath}
for all Borel sets $A \subset [a,b]$, where $0 < a < b < 1$. For $[c,d) \subset [a,b]$
we have shown before, that
\begin{displaymath}
  \mu([c,d)) = \int_{[c,d)} \frac{q(\mathbf{x})-x_1}{x_1(1-x_1)} \, d
  \mathbb{H}_1(x_1) = 0.
\end{displaymath}
Therefore, $\mu(B) = 0$ for all $B \in \mathcal{B}([a,b])$. In particular, $\{x_1 \in [a,b]|\, q(\mathbf{x})>x_1\}$ and $\{x_1 \in [a,b]|\, q(\mathbf{x})<
 x_1\}$ are $\mathbb{H}_1$ null sets and we have $q(\mathbf{x}) = x_1$  $\mathbb{H}_1$-a.s., hence,
\begin{align*}
   p_1^{-1}\{q(\mathbf{x})=x_1)\} &= \{\omega: q\{p_1(\omega),x_2, \ldots, x_k\} =
   p_1(\omega)\}\\ &= \{ \Q(Y = 1 \, | \, p_1, p_2 = x_2, \ldots p_k = x_k) = p_1 \}
\end{align*}
has $\Q$-measure 1 for all $(x_2, \ldots, x_k) \in [0,1]^{k-1}$. Therefore, $\Q(Y = 1|\, \mathbf{p}) = p_1$ $\Q$-a.s..
\end{proof}

The cross-calibration notion of \citet{FeinbergStewart2008} is analogous to our notion of cross-calibration with respect to $\{1,\dots,k\}$ which is equivalent to cross-ideal forecasters for binary events. Theorem \ref{thm:binary.outcomes} shows that both notions coincide with cross-calibration of $p_1$ with respect to $\{2,\dots,k\}$ which is a priori a weaker requirement. As noted by \citet{GneitingRanjan2013} the fact that probabilistically cali\-brated forecasters are automatically ideal clarifies the relation between PIT-histograms and  \emph{calibration curves} which are the diagnostic tool frequently used for assessing calibration of binary predictions \citep{Dawid1986,MurphyWinkler1992,RanjanGneiting2010}. As described in Section \ref{sec:diagnostic}, cross-calibration can be assessed with conditional PIT-histograms. Analogously, in the case of binary forecasts, conditional calibration curves can be considered. 

\section{Tests for assessing cross-calibration}
\label{sec:tests}

In this section we consider statistical tests for cross-calibration. The tests in Section \ref{sec:condex} are based on the idea of conditional exceedance probabilities \citep{MasonGalpinETAL2007}, whereas the tests in Section \ref{sec:prob.cross.cal.test} use a linear regression approach. Finally, the score regression approach by \citet{HeldRufibachETAL2010} to test for ideal forecasters is reviewed and extended to a test for cross-ideal forecasters in Section \ref{sec:cross.ideal.test}.

\subsection{Conditional exceedance probabilities}\label{sec:condex}

Suppose we have observations $F_{1,t},\dots,F_{k,t}$ and $Y_{t+1}$, $1 \le t \le N$ in a prediction space for serial dependence. We would like to test the null hypothesis that $F_{1,t}$ is cross-calibrated with respect to $J \subset \{1,\dots,k\}$. For $z \in [0,1)$, we define $B_{z,t}:= \one\{Z_{1,t} \le z\}$. Under the null hypothesis, using Proposition \ref{prop:pccci} and Theorem \ref{thm:serial.dependence}, conditional on $F_{i,t}$ for $i \in J$, the random variables $B_{z,1},\dots,B_{z,N}$ are independent Bernoulli random variables with success probability $z$. We stipulate the logistic regression models
\begin{equation}\label{eq:logitmodel}
\logit[\Q\{B_{z,t} = 1|\, F_{i,t}^{-1}(z), i \in J\}] = \beta_{0,z} + \sum_{i \in J} \beta_{i,z} F_{i,t}^{-1}(z)
\end{equation}
for each $z \in [0,1)$, where $\logit(p) = \log\{p/(1-p)\}$ is the logistic function. Using \eqref{eq:logitmodel}, the null hypothesis is
\begin{equation}\label{eq:globalnull2}
H_0 \colon\quad \beta_{0,z} = \logit(z), \; \beta_{i,z} = 0,\; i \in J, \quad \text{for all $z \in [0,1)$}.
\end{equation}
For each $z \in [0,1)$, we suggest to test the pointwise hypothesis 
\begin{equation} \label{eq:pointwisenull}
H_0(z)\colon \quad \beta_{0,z} = \logit(z), \; \beta_{i,z} = 0,\; i \in J,
\end{equation}
by a likelihood ratio test yielding a $p$-value $\pi(z)$. 

More precisely, the covariate vector $\mathbf{x}_{z,t}$ has one as the first entry and then $F_{i,t}^{-1}(z)$, $i \in J$ and the parameter vector $\boldsymbol{\beta}_z$ has entries $\beta_{0,z}$, $\beta_{i,z}$, $i \in J$. For values of $z$ close to zero or one, we frequently encounter the phenomenon of separation, that is, the likelihood converges, but at least one parameter value is infinite. Therefore, we have chosen to use the method of \citet{Firth1993}, which always yields finite parameter estimates; see \citet{HeinzeSchemper2002}. That is, we fit the parameters $\beta_{0,z}$, $\beta_{i,z}$, $i \in J$ by maximizing the penalized log-likelihood function
\[
\ell_p(\boldsymbol{\beta}_z) = \ell(\boldsymbol{\beta}_z) + \frac{1}{2}\log |I(\boldsymbol{\beta}_z)|,
\]
where
\[
\ell(\boldsymbol{\beta}_z) = \sum_{t=1}^N B_{z,t} \mathbf{x}_{z,t}^{\top} \boldsymbol{\beta}_z - \sum_{t=1}^N \log\{1 + \exp(\mathbf{x}_{z,t}^{\top}\boldsymbol{\beta}_z)\},
\]
and $|I(\boldsymbol{\beta}_z)|$ is the determinant of the Fisher information matrix. 
We denote the estimated parameter vector by $\boldsymbol{\hat\beta}_z$ with entries $\hat{\beta}_{0,z}$, $\hat{\beta}_{i,z}$, $i \in J$. For $N$ large enough, the test statistic
\[
T_z = -2 \{\ell_p(\boldsymbol{\hat\beta}_z) - \ell_p(\boldsymbol{\gamma}_z)\}
\]
has a $\chi^2$-distribution with $1+|J|$ degrees of freedom, where $|J|$ denotes the cardinality of $|J|$, and $\boldsymbol{\gamma}_z = (\logit(z),0,\dots,0)^{\top}$. We define the $p$-value $\pi(z) = 1 - \chi^2_{1+|J|}(T_z)$, where $\chi^2_{1+|J|}$ denotes the cumulative distribution function of a $\chi^2$ random variable with $1+|J|$ degrees of freedom. For the simulation studies below and the data analysis in Section \ref{sec:data} we have used the R-package of \citet{HeinzePlonerETAL2013} to calculate $T_z$.

In order to draw conclusions about the global null hypothesis $H_0$ at \eqref{eq:globalnull2} from the pointwise $p$-values $\pi(z)$, we adjust them for multiple testing. We follow the approach of \citet{CoxLee2008} to use the method of \citet[Chapter 2]{WestfallYoung1993} for functional data to compute adjusted $p$-values $r(z)$; see also \citet{MeinshausMaathuisETAL2011}. 

Let $0 < z_1 <\dots < z_M < 1$. Under the null hypothesis of cross-calibration, it is possible to simulate a vector of $p$-values $(\pi^*(z_1),\dots,\pi^*(z_M))$ with the same distribution as $(\pi(z_1),\dots,\pi(z_M))$ conditional on $F_{i,t}^{-1}(z_m)$, $i \in J$, $1 \le t \le N$, $1\le m \le M$, as follows. Let $U_1,\dots,U_N$ be iid standard uniform random variables. For $1 \le m \le M$, define $B_{z_m,t}^* = \one(U_t \le z_m)$, and let $\pi^*(z_m)$ be the $p$-value from the pointwise likelihood ratio test for the simulated data vector $(B_{z_m,t}^*)_{1 \le t \le N}$ and covariates $(\mathbf{x}_{z_m,t})_{1 \le t \le N}$ as before.

The adjusted $p$-values can now be obtained as follows. Let $\sigma$ be the permutation of $\{1,\dots,M\}$ such that $\pi\{z_{\sigma(1)}\}\le \dots \le \pi\{z_{\sigma(M)}\}$. This permutation $\sigma$ remains unchanged in the following procedure. 
For a simulated vector of $p$-values $(\pi^*(z_1),\dots,\pi^*(z_M))$, we define $q_{m}^* = \min\{\pi^*\{z_{\sigma(s)}\}\colon s \ge m\}$. Repeating this procedure $L$ times, we obtain an array $(q_{m,l}^*)_{1\le m\le M,1 \le l\le L}$ and define the adjusted $p$-values $r_{1},\dots,r_{M}$ corresponding to  $z_1,\dots,z_M$ as
\[
r_m = \frac{1}{L}\sum_{l=1}^L \one\{q_{\sigma^{-1}(m),l}^* \le \pi(z_m)\},\quad 1 \le m \le M.
\]
The global null hypothesis $H_0$ at \eqref{eq:globalnull2} can be rejected at level $\alpha\in (0,1)$ if $\min\{r_m \colon 1 \le m \le M\} \le \alpha$. Furthermore, the adjusted $p$-values allow to draw conclusions for which values of $z_m \in (0,1)$ miscalibration occurs. For example, a prediction method may perform satisfactory for the left tail of the distribution, that is, for $z$ close to zero, the adjusted $p$-values are large, whereas it fails to capture the right tail and hence for $z$ close to one, the adjusted $p$-values are small. We call this test the \emph{CEP test with respect to $J$}. 

\begin{rem}
It is important to note that the adjusted $p$-values $r_m$ remain the same, if the pointwise $p$-values $\pi(z)$ are transformed with a strictly monotone transformation. Therefore, even if the $\pi(z)$ are only asymptotic $p$-values, the adjusted $p$-values $r_m$ will control the familywise error rate at the desired level $\alpha$ even for finite samples (for large numbers $L$ of bootstrap replications); see \citet[Chapter 2]{WestfallYoung1993} and \citet{CoxLee2008}. It is nevertheless important which test statistic to choose for the pointwise tests as the power of the overall test will crucially depend on the power of the pointwise tests. 
\end{rem}

\begin{ex}[Example \ref{ex:marginal.calibration} continued]\label{ex:CEP-GN}
We consider the forecasters $F_1,\dots,F_4$ of Example \ref{ex:marginal.calibration}; see Table \ref{tab:cross.ideal}. For sample size $N=50$, we performed the CEP tests for $F_1,\dots,F_4$ with respect to all possible subsets of $F_1,\dots,F_4$ and calculated the Monte-Carlo power based on $10'000$ simulations. We used the gridpoints $z_m = \{1+ (18/19)m\}/20$, $0 \le m \le 19$. The number of bootstrap replications for calculating the adjusted $p$-values is set to $L=500$. For data examples, $L$ should be much larger. However, for analysing the performance of the resampling based $p$-values, it is more important to run a large number of simulations than to have a large bootstrap sample for each of them; see \citet{WestfallYoung1993} for a more detailed discussion. The results are given in Table \ref{tab:CEP-GN}. 

Conditioning on $F_2$ corresponds to conditioning on the trivial $\sigma$-algebra, therefore testing conditional on $F_1,F_2,F_3$ is the same as testing conditional on $F_1,F_3$, for example. Hence, Table \ref{tab:CEP-GN} contains all interesting subsets of $F_1,\dots,F_4$ and the column entitled `$F_2$' corresponds to a test for probabilistic calibration. The test performs well, even for the small sample size $N=50$. Generally, the power of the test appears to increase, the more information is used. For example, the test has difficulty to detect that $F_3$ is not ideal with respect to itself but it performs well for rejecting the null hypothesis that $F_3$ is cross-ideal with respect to $F_1,F_3$, $F_3,F_4$ or $F_1,F_3,F_4$. 
\end{ex}

\begin{table}
\centering
\begin{tabular}{|c|cccccccc|}
\hline
wrt & $F_1$ & $F_2$ & $F_3$ & $F_4$ & $F_1,F_3$ & $F_1,F_4$ & $F_3,F_4$ & $F_1,F_3,F_4$\\\hline
$F_1$ & 0.056 & 0.051 & 0.055 & 0.055 &  0.050 & 0.051 &  0.050 & 0.048\\
$F_2$ & 0.997 & 0.051 & 0.979 & 0.997 &  0.994 & 0.990 &  0.993 & 0.977\\
$F_3$ & 0.052 & 0.052 & 0.168 & 0.051 &  0.635 & 0.051 &  0.634 & 0.582\\
$F_4$ & 1.000 & 1.000 & 1.000 & 1.000 &  1.000 & 1.000 &  1.000 & 1.000\\\hline
\end{tabular}
\caption{Monte-Carlo power for the  CEP tests at significance level $\alpha=0.05$. Details of the simulation study are given in Example \ref{ex:CEP-GN}.\label{tab:CEP-GN}}
\end{table}

\begin{ex}[Example \ref{ex:cross-ideal} continued]\label{ex:CEP-CI}
We applied the CEP tests to data simulated from the prediction space described in Example \ref{ex:cross-ideal}. We used the same grid and other parameters as in the previous example, except that we considered two different sample sizes $N=50$ and $N=200$. The results from $10'000$ simulations can be seen in Table \ref{tab:CEP-CI}. Here, the power for sample size $N=50$ is only small. Fortunately, it appears to increase rapidly with sample size and is satisfactory for $N=200$.
\end{ex}

\begin{table}
\begin{center}
\begin{tabular}{|c|ccc|}
\hline
$N=50$ & $F_1$ & $F_2$ & $F_1, F_2$ \\\hline
$F_1$ & 0.052 & 0.053 & 0.050\\
$F_2$ & 0.156 & 0.051 & 0.139\\\hline
\end{tabular}
\quad
\begin{tabular}{|c|ccc|}
\hline
$N=200$ & $F_1$ & $F_2$ & $F_1, F_2$ \\\hline
$F_1$ &  0.050 & 0.052 & 0.051\\
$F_2$ & 0.533 & 0.057 & 0.464\\
\hline
\end{tabular}
\end{center}
\caption{Monte-Carlo power for the CEP tests at significance level $\alpha=0.05$. Details of the simulation study are given in Example \ref{ex:CEP-CI}.\label{tab:CEP-CI}}
\end{table}

\subsection{Linear regression approach}
\label{sec:prob.cross.cal.test}

To formulate the linear regression approach (LRA) tests for cross-calibration, we restrict ourselves to a parametric class of cumulative distribution functions $\mathcal{F} = \{
F_\lambda|\, \lambda \in \Lambda\}$, where $\Lambda \subset \mathbb{R}^d$. Suppose we have $N$ forecast-observation tuples $(F_{1,t},\ldots,F_{k,t},Y_t,V_t)$, $1 \le t \le N$ in a prediction space for serial dependence such that $F_{i,t} \in \mathcal{F}$ for all $i$ and $t$. Each forecaster $F_{i,t}$ is then represented by its predictive parameter vector $\lambda_{i,t} = (\lambda_{i,t}^{(1)},\ldots, \lambda_{i,t}^{(d)})$ for $1 \le i \le k$. We want to test the hypothesis that $F_{1,t}$ is cross-calibrated with respect to some $J= \{i_1,\dots,i_m\} \subset \{1,\dots,k\}$, for $1 \le t \le N$.  
Proposition \ref{prop:pccci} and Theorem \ref{thm:serial.dependence} lead to the null hypothesis
\begin{displaymath}
  H_0 \colon \quad \mathcal{L}\{\Phi^{-1}(Z_{1,1}),\dots,\Phi^{-1}(Z_{1,N}) \vert \lambda_j, j \in J, 1\le t \le N\} = \cN_N(0,I_N), \quad \text{almost surely,}
\end{displaymath}
where $\Phi^{-1}$ denotes the quantile function of a standard normal distribution and $\cN_N(0,I_N)$ denotes a multivariate standard normal distribution. 
In order to test this hypothesis we perform an F-test based on linear regression. 
We consider the linear model
\begin{equation}\label{eq:linmod}
  \mathbf{Y} = \mathbf{D_J}\boldsymbol{\beta} + \boldsymbol{\epsilon},
\end{equation}
where
\begin{displaymath}
  \mathbf{Y} = (\Phi^{-1}(Z_{1,1}),\dots, \Phi^{-1}(Z_{1,N}))^T \in \R^N
\end{displaymath}
is the response vector, 
\begin{equation}\label{eq:design1}
  \mathbf{D_J} = \left(
  \begin{array}{*{7}c}
    1 & \lambda_{i_1,1}^{(1)} & \cdots & \lambda_{i_1,1}^{(d)} & \lambda_{i_2,1}^{(1)} & \cdots
    & \lambda_{i_m,1}^{(d)}\\
    1 & \lambda_{i_1,2}^{(1)} & \cdots & \lambda_{i_1,2}^{(d)} & \lambda_{i_2,2}^{(1)} & \cdots
    & \lambda_{i_m,2}^{(d)}\\
    \vdots & \vdots & \ddots & \vdots & \vdots & \ddots & \vdots\\
    1 & \lambda_{i_1,N}^{(1)} & \cdots & \lambda_{i_1,N}^{(d)} & \lambda_{i_2,N}^{(1)} & \cdots
    & \lambda_{i_m,N}^{(d)}\\
  \end{array}
  \right) \in \R^{N \times (1+dm)}
\end{equation}
is the design matrix,
\begin{displaymath}
  \boldsymbol{\beta} = (\beta_0,\beta_1, \ldots, \beta_{dk})^T \in \R^{1+ dm}
\end{displaymath}
is the parameter vector we would like to estimate and $\boldsymbol{\epsilon} \in \R^N$ is a random error vector, which is multivariate standard normal under the null hypothesis. 

In order to estimate $\boldsymbol{\beta}$ the method of least square is
used and we obtain the estimated parameter vector
\begin{displaymath}
  \boldsymbol{\hat{\beta}} = (\mathbf{D_J}^T \mathbf{D_J})^{-1} \mathbf{D_J}^T \mathbf{Y},
\end{displaymath}
the vector of fitted values
\begin{displaymath}
  \boldsymbol{\hat{Y}} = \mathbf{D_J} (\mathbf{D_J}^T \mathbf{D_J})^{-1} \mathbf{D_J}^T \mathbf{Y} =
  \mathbf{D_J} \boldsymbol{\hat{\beta}},
\end{displaymath}
and the residual vector $\boldsymbol{\hat{\epsilon}} = \mathbf{Y} - \mathbf{\hat{Y}}$.

Under the null hypothesis we have that
\begin{displaymath}
  \boldsymbol{\beta} = (0,0, \ldots, 0)^T \in \R^{1 + dm}\quad \mbox{and} \quad\boldsymbol{\epsilon} \sim \cN_N(0,I_N).
\end{displaymath}
To test the assumption that $\boldsymbol{\epsilon}$ is standard normal one can use a normality test such as the Anderson-Darling or Shapiro-Wilk \citep{AndersonDarling1954,ShapiroWilk1965,YapSim2011}. This yields a $p$-value $\pi_{normal}$.
To test the other assumption 
we consider the test statistic
\begin{displaymath}
  F_0 = \frac{\boldsymbol{\hat{\beta}}^T\!(\mathbf{D_J}^T \mathbf{D_J})\,
    \boldsymbol{\hat{\beta}}}{(1 + dm) \hat{\sigma}^2},
\end{displaymath}
where
\begin{displaymath}
  \hat{\sigma}^2 = \frac{\|\mathbf{Y} - \mathbf{\hat{Y}}\|^2}{N-(1 + dm)}
\end{displaymath}
is the unbiased variance estimator. The test statistic $F_0$ has a Fisher distribution with $1 + dm$ and $N - 1 - dm$ degrees of freedom; see for example \citet{MontgomerPeckETAL2001}. The $p$-value $\pi_F$ is then
\begin{displaymath}
  \pi_F = 1 - F_{1 + dm, N - 1 - dm}(F_0),
\end{displaymath}
where $F_{p,q}$ denotes the Fisher cumulative distribution function with $p$ and $q$ degrees of freedom. Combining these two tests by the method of Holm leads to the adjusted $p$-value 
\begin{displaymath}
  \pi_{\text{adjust}} = 2 \min\{ \pi_F,  \pi_{\text{normal}}\}.
\end{displaymath}

We need that $\operatorname{rank}(\mathbf{D_J}) = 1 + dm$, otherwise the regression analysis is not
possible. Therefore, any forecaster $F_{i,t}$, $i \in J$ has to predict for each parameter
at least two distinct values. Otherwise, we omit the parameter for this forecaster in the model and are still able to use the test, which we call the \emph{LRA test with respect to $J$}.

\begin{ex}[Example \ref{ex:marginal.calibration} continued]
  \label{ex:LRA.Gn}
Recall the forecasters $F_1,F_2,F_3$ and $F_4$ from Example \ref{ex:marginal.calibration}. Then all four forecasters are in the class of distribution functions $\mathcal{F} = \bigl\{F_\lambda \vert \lambda = (\mu,\sigma,\tau) \in \R \times (0,\infty) \times \{-1,0,1\}\bigr\}$ for $F_\lambda = \frac{1}{2} \{ \mathcal{N}(\mu,\sigma) + \mathcal{N}(\mu + \tau, \sigma)\}$. We apply the LRA test for all combinations of forecasters for sample sizes $N = 20$ and $N=50$. The Monte-Carlo powers of $\pi_{\text{adjust}}$ for $10'000$ simulations are given in Table \ref{tab:LRA.Gn}.
We only listed the Monte-Carlo powers with respect to individual forecasters, since we omit some parameters to have a design matrix with full rank. Therefore, testing cross-calibration with respect to $J \subset \{1,2,3,4\}$ leads to the same test as testing with respect to $F_3$ if $3 \in J$ and testing with respect to $F_1$ otherwise.  
For testing standard normality, we have used an Andersen-Darling test (with mean set to zero and variance set to one). In the cases of cross-calibration, the normality test never rejects the null hypothesis, which explains the conservative levels of around $0.025$ in these cases. 
The test is powerful even for the small sample sizes and it provides the expected results from the theoretical considerations; see Table \ref{tab:cross.ideal}. In particular, the LRA test detects well, that $F_3$ is not ideal with respect to itself contrary to the CEP test; compare Table \ref{tab:CEP-GN}. 
\begin{table}
  \centering
  \begin{tabular}{|c|cccc|}\hline
    $N = 20$& $F_1$ & $F_2$ & $F_3$ & $F_4$ \\ \hline
    $F_1$ & 0.024 & 0.026 & 0.025 & 0.025 \\
    $F_2$ & 0.884 & 0.025 & 0.825 & 0.884 \\
    $F_3$ & 0.024 & 0.025 & 0.238 & 0.024 \\
    $F_4$ & 1.000 & 0.880 & 0.999 & 1.000 \\ \hline
  \end{tabular}
\quad
  \begin{tabular}{|c|cccc|}\hline
    $N = 50$& $F_1$ & $F_2$ & $F_3$ & $F_4$ \\ \hline
    $F_1$ & 0.024 & 0.022 & 0.025 & 0.024 \\
    $F_2$ & 1.000 & 0.027 & 1.000 & 1.000 \\
    $F_3$ & 0.026 & 0.023 & 0.734 & 0.026 \\
    $F_4$ & 1.000 & 1.000 & 1.000 & 1.000 \\ \hline

  \end{tabular}
  \caption{Monte-Carlo power of the LRA test at significance level $\alpha=0.05$ for different sample sizes. Details are given in Example \ref{ex:LRA.Gn}. \label{tab:LRA.Gn}}
  
\end{table}
\end{ex}
\begin{ex}[Example \ref{ex:cross-ideal} continued]
\label{ex:prob-cross-ideal-test}
	Coming back to forecasters $F_1$ and $F_2$ from Example \ref{ex:cross-ideal} we perform the F-tests for different sample sizes $N$. The Monte-Carlo powers of the tests for $10'000$ simulations can be found in Table \ref{tab:prob-cross-ideal-test}. The Monte-Carlo powers are low even for large sample sizes, contrary to the results of the CEP tests; compare Table \ref{tab:CEP-CI}. We do not report the power of LRA test in this example because the Anderson-Darling test for standard normality almost never rejects the null hypothesis. 
\end{ex}

\begin{table}
\begin{center}
\begin{tabular}{|l | cccccc|} \hline
	$N$ & 20 & 50 & 100 & 200 & 1000 & 5000\\ \hline
	$F_1$ wrt $F_1$ & 0.051 & 0.049 & 0.048 & 0.049 & 0.053 & 0.047\\
	$F_1$ wrt $F_2$ & 0.053 & 0.05\phantom{0} & 0.045 & 0.05\phantom{0} & 0.05\phantom{0} & 0.05\phantom{0}\\
	$F_1$ wrt $F_1,F_2$ & 0.05\phantom{0} &  0.048 & 0.046 & 0.049 & 0.051 & 0.048\\
	$F_2$ wrt $F_1$ & 0.092 & 0.105 & 0.114 & 0.122 & 0.139 & 0.135\\ 
	$F_2$ wrt $F_2$ & 0.051 & 0.05\phantom{0} & 0.045 & 0.048 & 0.051 & 0.049 \\
	$F_2$ wrt $F_1,F_2$ & 0.081 & 0.092 & 0.097& 0.108 & 0.121 & 0.119\\ \hline
\end{tabular}
\end{center}
\caption{Monte-Carlo power for F-test for different sample sizes $N$ and $10'000$ simulations in Example \ref{ex:prob-cross-ideal-test}.\label{tab:prob-cross-ideal-test}}
\end{table}%

\subsection{Score regression approach}
\label{sec:cross.ideal.test}

\citet{HeldRufibachETAL2010} suggest a significance test for ideal forecasters based on scoring rules. They use the continuous ranked probability score (CRPS) \citep{GneitingRafteryETAL2005}
and the Dawid-Sebastiani score (DSS) \citep{DawidSebastian1999}. Their approach relies on independent forecast-observation tuples, and this restriction remains, when generalizing their approach to a test for cross-ideal forecasters. Therefore, throughout this section we work in a one-period prediction space.

First, we recall some preliminaries on the CRPS and the DSS. Let $F$ and $f$  denote the predictive CDF of a forecaster and the predictive density function,
respectively.  Let $\mu$ and $\sigma^2$ be the predictive mean and 
variance, respectively.\footnote{In the prediction space setting, the quantities $F$, $f$, $\mu$, and $\sigma$ are random $\mathcal{A}_0$-measurable quantities. For ease of presentation, in this section we treat them as as if they were deterministic, or, alternatively, one should consider all expectations and variances as conditional on the forecaster's information set $\mathcal{A}_0\subset \mathcal{A}$.} The observed value of $Y$ is
denoted by $y$. The CRPS is given by
\begin{displaymath}
  \emph{CRPS}(F,y) = \int_\infty^\infty \{F(x) - \one(y \le x)\}^2\, dx
\end{displaymath}
and the DSS by
\begin{displaymath}
  \emph{DSS}(F,y) = \frac{1}{2}\left\{\log(\sigma^2) + \tilde{y}^2\right\},
\end{displaymath}
where $\tilde{y} = (y- \mu)/\sigma$.
 For a forecaster predicting a normal distribution $F=\cN(\mu,\sigma^2)$, the CRPS turns out to be
\begin{displaymath}
  \emph{CRPS}(F,y) = \sigma \Big[\tilde{y}\{2\Phi(\tilde{y}) - 1\} + 2 \phi(\tilde{y}) - \frac{1}{\sqrt{\pi}}\Big],
\end{displaymath}
where $\Phi$ and $\phi$ are the CDF and the density of a standard normal distribution, respectively \citep{GneitingRaftery2007}. The CRPS is a strictly proper scoring rule relative to the class of probability measures with finite first moments; see \citet{GneitingRaftery2007} for details on proper and strictly proper scoring rules.

For a normal prediction the DSS is the same as the classical logarithmic score $\emph{LS}(f,y) =$ $ -\log\{f(y)\}$ up to a constant. The DSS is a proper scoring rule relative to the class of probability measures with finite second moment. It is strictly proper relative to any class of probability measures that are characterized by their first two moments, such as Gaussian measures or other location-scale families of distributions \citep{GneitingRaftery2007}.

We assume now that mean and variance of the predictive distribution $F$ match mean and variance of the outcome $Y$. The following properties of the CRPS and the DSS can be found in \citet{HeldRufibachETAL2010}. 
For the DSS we get
\begin{equation}\label{eq:EDSS}
  \E\{\emph{DSS}(F,Y)\} = \frac{1}{2} + \log(\sigma).
\end{equation}
If the distribution of $Y$ has finite fourth moment then $\Var\{\emph{DSS}(F,Y)\}$ is a constant that does not depend on $\mu$ or $\sigma$. If $Y$ has a normal distribution then $\Var\{\emph{DSS}(F,Y)\} = \frac{1}{2}$.
Similar results for the CRPS are harder to obtain. \citet{HeldRufibachETAL2010} show the following lemma.
\begin{lem}
Let $X_0$ be a random variable with finite second moment. For $a \in \mathbb{R}$ and $b > 0$, let $Y = a + bX_0$, let $F$ be the CDF of $Y$, and $\sigma^2$ its variance. Then,
\begin{equation}\label{eq:ECRPS}
  \E\{\emph{CRPS}(F,Y)\} = d \,\sigma \quad \text{and} \quad \Var\{\emph{CRPS}(F,y)\} = D \,\sigma^2,
\end{equation}
where
\[
d = \frac{\E|X_0-X_0'|}{2\sqrt{\Var(X_0)}} \quad \text{and} \quad D = \frac{\Var\big\{\E\big(\big|X_0 - X_0'|\,\big|X_0\big)\big\}}{\Var(X_0)}
\]
with $X_0'$ an independent copy of $X_0$.
\end{lem}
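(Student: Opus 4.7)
The plan is to reduce everything to the well-known kernel representation of the CRPS and then exploit the affine scaling $Y = a + bX_0$.

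First I would recall (or cite \citet{GneitingRaftery2007}) that for any CDF $F$ with finite first moment and any $y \in \R$,
\[
\operatorname{CRPS}(F,y) = \E_F|X - y| - \tfrac{1}{2}\E_F|X - X'|,
\]
where $X, X'$ are independent with CDF $F$. Applied to $F$ being the law of $Y = a + bX_0$, independent copies of $Y$ become $a + bX_0'$ and $a + bX_0''$ with $X_0', X_0''$ iid copies of $X_0$, and the constant shift $a$ cancels in the absolute values. Plugging in the random observation $Y$ itself,
\[
\operatorname{CRPS}(F,Y) = b\, h(X_0) - \tfrac{b}{2}\,\E|X_0 - X_0'|,\qquad h(x):=\E\bigl(|x - X_0'|\bigr).
\]

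Taking expectations, $\E h(X_0) = \E|X_0 - X_0'|$, so
\[
\E\{\operatorname{CRPS}(F,Y)\} = b\,\E|X_0 - X_0'| - \tfrac{b}{2}\,\E|X_0 - X_0'| = \tfrac{b}{2}\,\E|X_0 - X_0'|.
\]
Since $\sigma^2 = \Var(Y) = b^2 \Var(X_0)$, substituting $b = \sigma/\sqrt{\Var(X_0)}$ gives the claimed expression $d\sigma$ for the mean. For the variance, the deterministic second term drops out, leaving
\[
\Var\{\operatorname{CRPS}(F,Y)\} = b^2\,\Var\{h(X_0)\} = b^2\,\Var\bigl\{\E\bigl(|X_0-X_0'|\,\bigl|\,X_0\bigr)\bigr\},
\]
and the same substitution $b^2 = \sigma^2/\Var(X_0)$ yields $D\sigma^2$.

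There is essentially no obstacle: the only nontrivial ingredient is the kernel representation of CRPS, which is standard, and a careful bookkeeping of conditioning in the variance identity so that the inner absolute value is conditioned on $X_0$ rather than on $X_0'$. I would only need to check that the finite second moment of $X_0$ ensures integrability of all the quantities involved (in particular $\E|X_0 - X_0'|^2 \leq 2(\E X_0^2 + (\E X_0)^2) < \infty$ by Cauchy--Schwarz/Minkowski), which justifies Fubini when exchanging the expectation over the independent copy with the outer expectation over $X_0$.
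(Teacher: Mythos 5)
Your argument is correct. Note that the paper itself gives no proof of this lemma; it is stated as a result of \citet{HeldRufibachETAL2010} and simply cited. Your derivation via the kernel representation $\CRPS(F,y)=\E_F|X-y|-\tfrac12\E_F|X-X'|$, followed by the observation that the additive constant $a$ cancels, that the scale $b>0$ factors out of the absolute values, and that $b=\sigma/\sqrt{\Var(X_0)}$, is the standard route to this identity and fills the gap cleanly; the conditioning in the variance term is handled correctly (the first kernel term evaluated at the random observation is exactly $b\,\E(|X_0-X_0'|\,|\,X_0)$, and the second term is deterministic, so it contributes nothing to the variance). The only implicit assumption worth flagging is $\Var(X_0)>0$, without which $d$ and $D$ are not defined; this is clearly intended in the statement.
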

The lemma shows that for location-scale families of distributions the expected CRPS of an ideal forecast is proportional to the predictive standard deviation $\sigma$, and the variance of the CRPS is proportional to the predictive variance $\sigma^2$. For the family of normal distributions we have $d = 1/\sqrt{\pi}$ and $D = \{1/3 - (4-\sqrt{12})/\pi\} \approx 0.16275$. The constants for other families can be calculated at least numerically.

For the score regression approach, we consider $N$ independent and identically distributed observations $(F_{1,n}$, \ldots, $F_{k,n}$, $Y_n$, $V_n)$, $1 \le n \le N$ in the prediction space setting; see Definition \ref{def:prediction.space}. The expectation of the DSS depends on the logarithm of the predictive standard deviation; see equation \eqref{eq:EDSS}. Therefore, we stipulate a regression model of the form
\begin{displaymath}
  \emph{DSS}(F_{1,n},Y_n) = a + b_1 \log(\sigma_{1,n}) + \ldots + b_k \log(\sigma_{k,n}) + \epsilon_n,
\end{displaymath}
where $\sigma_{i,n}$ is the predictive standard deviation of $F_{i,n}$ and $\epsilon_n$ is an independent error with mean zero. Since the variance of the DSS is constant, irrespectively of the predictive variance, we can use a homoscedastic regression model to compute the least squares estimators $\hat{a}, \hat{b}_1, \ldots, \hat{b}_k$. In the case $k=1$ this is the same model as proposed at \citet[eq.~(7)]{HeldRufibachETAL2010}. We need to assume that the scores have finite variance, which is fulfilled if $Y$ has a finite fourth moment (conditional on $\mathcal{A}_1$). 

For the CRPS, motivated by \eqref{eq:ECRPS}, we stipulate the regression model
\begin{displaymath}
  \emph{CRPS}(F_{1,n},Y_n) = c + d_1 \sigma_{1,n} + \ldots + d_k \sigma_{k,n} + \epsilon_n.
\end{displaymath}
We have $\Var\{\emph{CRPS}(F_{1,n},Y_n)\} \varpropto \sigma_{1,n}$ and use a weighted
regression analysis with weights $1/\sigma_{1,n}$ to obtain estimators $\hat{c},\hat{d}_1,\dots,\hat{d}_k$; see for example \citet{MontgomerPeckETAL2001}. 

Both of these models can be used for testing if the forecaster $F_1$ is cross-ideal with respect to $\mathcal{A}_1 = \sigma(F_1), \ldots, \mathcal{A}_k=\sigma(F_k)$ in case of a normal forecaster $F_1$. The DSS can also be used if the prediction is non-normal as emphasized in \citet{HeldRufibachETAL2010}. The CRPS model is useful for any location-scale family of distributions.

We have the null hypotheses
\begin{align*}
  &H_0: a = 1/2, b_1 = 1 \mbox{ and } b_2 = \ldots = b_k = 0 \mbox{ for DSS},\\
  &H_0: c = 0, d_1 = 1/\sqrt{\pi} \mbox{ and } d_2 = \ldots = d_k = 0 \mbox{ for CRPS},
\end{align*}
and perform a $\chi^2$-test. We use the test statistics
\begin{align*}
  T_{DSS} &= (\hat{a} - 1/2, \hat{b}_1 - 1, \hat{b}_2, \ldots, \hat{b}_k) \hat{\Sigma}_{DSS}^{-1}
    (\hat{a} - 1/2, \hat{b}_1 - 1, \hat{b}_2, \ldots, \hat{b}_k)^\top,\\
  T_{CRPS} &= (\hat{c}, \hat{d}_1 - 1/\sqrt{\pi}, \hat{d}_2, \ldots, \hat{d}_k) \hat{\Sigma}_{CRPS}^{-1}
    (\hat{c}, \hat{d}_1 - 1/\sqrt{\pi}, \hat{d}_2, \ldots, \hat{d}_k)^\top,
\end{align*}
where $\hat{\Sigma}_{DSS}$, $\hat{\Sigma}_{CRPS}$ are the estimated covariance matrices. Both test statistics, $T_{DSS}$ and $T_{CRPS}$, are
asymptotically $\chi^2$-distributed with $1 + k$ degree of freedom and asymptotic
$p$-values are given by
\begin{displaymath}
  \pi_{DSS} = 1 - \chi^2_{1 + k}(T_{DSS}), \quad\text{and}\quad \pi_{CRPS} = 1 - \chi^2_{1 + k}(T_{CRPS}).
\end{displaymath}
If $k=1$, that is the case of just one forecaster, we obtain the test for an ideal forecaster suggested in
\citet{HeldRufibachETAL2010}. We call the tests presented in this section \emph{SRA tests} as they are based on a score regression approach (SRA). As noted already by \citet{HeldRufibachETAL2010}, SRA tests can only be used if each forecaster predicts at least two different variances, therefore we cannot apply it to Example \ref{ex:marginal.calibration} by \citet{GneitingRanjan2013}. Instead, we consider the following setup for illustration.
 
\begin{ex}
  \label{ex:cross.ideal.test.crps}
We consider two forecasters $F_{1,n} = \cN(0,(1 + \sigma_n)^2)$ and $F_{2,n} = \cN(0,(1 +
\sigma_n + \epsilon_n)^2)$, where $\sigma_n \sim \mathcal{U}([0,1])$ and $\epsilon_n \sim
\cN(0,1/16)$. The observations are $Y_n \sim \cN(0,(1 + \sigma_n)^2)$. It is clear, that $F_1$
is ideal with respect to $\mathcal{A}_1 = \sigma(\sigma_n)$, $F_2$ is not ideal with respect to $\mathcal{A}_2 = \sigma(\sigma_n,\epsilon_n)$, $F_1$ is cross-ideal
with respect to $F_1, F_2$, and $F_2$ is not cross-ideal with respect to $F_1, F_2$.
In Table \ref{tab:cross.ideal1}, the Monte-Carlo powers of the CRPS tests are displayed. The tests are performed at significance level $\alpha= 0.05$. The results are in accordance with the theoretical considerations. However, the Monte-Carlo power of the test if $F_2$ is cross-ideal with respect to $F_1, F_2$ is higher then in the test if $F_2$ is ideal with respect to $F_2$. It is interesting to see that taking $F_1$ into account helps to detect that $F_2$ is not ideal with respect to $F_2$.
\end{ex}

 \begin{table}
    \centering
    \begin{tabular}{|l|ccccc|}\hline
      $N$ & 30 & 50 & 100 & 200 & 500\\\hline
      $F_1$ wrt$F_1$ & 0.083 & 0.073 & 0.061 & 0.056 & 0.050\\
      $F_1$ wrt$F_1,F_2$ & 0.094 & 0.073 & 0.064 & 0.059 & 0.055\\
      $F_2$ wrt$F_2$ & 0.240 & 0.285 & 0.429 & 0.707 & 0.962\\
      $F_2$ wrt$F_1,F_2$ & 0.292 & 0.376 & 0.582 & 0.852 & 0.998\\\hline
   \end{tabular}
    \caption{Monte-Carlo powers for the CRPS tests with
      sample sizes $N$ and 10'000 simulations described in detail in Example \ref{ex:cross.ideal.test.crps}.\label{tab:cross.ideal1}}
  \end{table}

\begin{ex}[Example \ref{ex:cross-ideal} continued]
  \label{ex:cross.ideal.test.dss}
Considering again Example \ref{ex:cross-ideal}, we used the DSS test to assess if the forecasters are cross-ideal. In Table \ref{tab:cross.ideal2} we present the Monte-Carlo powers of the tests which are performed at significance level $\alpha=0.05$. The CRPS cannot be used for $F_2$ since the forecast is not normal. As expected, the tests show that $F_1$ is ideal with respect to $\mathcal{A}_1$ and also cross-ideal with respect to $\mathcal{A}_1,\mathcal{A}_2$. For a sample size of $N=200$ the level of the test is kept reasonably well. The forecaster $F_2$ is ideal with respect to $\mathcal{A}_2$ but fails to be cross-ideal with respect to $\mathcal{A}_1,\mathcal{A}_2$. The test shows a good power already for a sample size of $N=100$. However, for $F_2$ the test is slightly anticonservative even for a sample size of $N=500$.
\end{ex}

  \begin{table}
    \centering
    \begin{tabular}{|l|ccccc|}\hline
      $N$ & 30 & 50 & 100 & 200 & 500\\\hline
      $F_1$ wrt$F_1$ & 0.085 & 0.075 & 0.064 & 0.055 & 0.056\\
      $F_1$ wrt$F_1,F_2$ & 0.084 & 0.072 & 0.065 & 0.057 & 0.054\\
      $F_2$ wrt$F_2$ & 0.111 & 0.096 & 0.090 & 0.074 & 0.072\\
      $F_2$ wrt$F_1,F_2$ & 0.286 & 0.416 & 0.718 & 0.963 & 1.000\\\hline
   \end{tabular}
    \caption{Monte-Carlo power for the DSS tests with
      sample sizes $N$ and 10'000 simulations described in detail in Example \ref{ex:cross.ideal.test.dss}.\label{tab:cross.ideal2}}
  \end{table}

\subsection{Summary}\label{sec:summary}

We have presented three different approaches for testing cross-calibration, the CEP tests in Section \ref{sec:condex}, the LRA tests in Section \ref{sec:prob.cross.cal.test} and the SRA tests in Section \ref{sec:cross.ideal.test}. 
The first two approaches allow to test for cross-calibration of $F_{1}$ with respect to any subset $J \subset \{1,\dots,k\}$, whereas the SRA tests only allow to test for $F_{1}$ being cross-ideal which is equivalent to requiring that $1 \in J$. The CEP test and the LRA test with respect to $J=\emptyset$ are tests for probabilistic calibration, that is, the classical hypothesis of uniformity and independence of PIT values. While the SRA tests require independent forecast-observation tuples,  the CEP and the LRA tests are formulated in a prediction space for serial dependence, which is a scenario that is frequently encountered in practice; see also Section \ref{sec:data}. 

The CEP test has the advantage that it provides information concerning the parts of the distribution where miscalibration is detected (in terms of quantile levels); this is illustrated in Figures \ref{fig:dat1} and \ref{fig:dat2}. It may be considered a disadvantage that the adjusted $p$-values are simulation based and depend on a grid $0 < z_1 < \dots < z_M < 1$ that is to be chosen. In simulations, the method has shown to be robust to the number $M$ of grid points. 
On the contrary, the $p$-values for the LRA test are given explicitly. The forecasters have to be described through a finite-dimensional parameter vector and there are some restrictions concerning the predictive parameters, as it has to be ensured that the design matrix $\mathbf{D_J}$ at \eqref{eq:design1} has full rank. For the forecasters of Example \ref{ex:marginal.calibration}, the LRA test has overall a better power than the CEP test; see Examples \ref{ex:CEP-GN} and \ref{ex:LRA.Gn}. The difference is minor, except for the hypothesis that the forecaster $F_3$ is ideal. Here, for sample size $N=50$, the LRA test achieves a power of $0.734$, whereas the CEP test only has a power of $0.168$. For the forecasters in Example \ref{ex:cross-ideal} the CEP test outperformed the LRA test; see Examples \ref{ex:CEP-CI} and \ref{ex:prob-cross-ideal-test}. In fact, for sample size $N=200$, the power of the CEP test is more than three times higher than the power of the LRA test. 

The following modifications of the CEP and the LRA tests are straight forward but unexplored. The logistic regression model in \eqref{eq:logitmodel} can be replaced by any other regression model for a binary outcome variable, where it is possible to formulate a test for an analogous  pointwise null hypothesis as given at \eqref{eq:pointwisenull}. If forecasters choose their distributions from a parametric class of distributions as assumed in LRA approach, it could also be considered to regress the random variables $B_z = \one(Z_{1,t} \le z)$ on the predicted parameter values. In the LRA, the linear regression model stipulated at \eqref{eq:linmod} can be replaced by some other regression model for a vector of real valued outcomes. 

We would like to remark that the CEP and the LRA tests are formulated in the prediction space setting for serial dependence and make use of condition \eqref{eq:serialass}. It appears that deciding whether this assumption is justified in a given application context is sometimes a delicate matter. For example, if a forecaster $i$ bases her predictions purely on intuition, then \eqref{eq:serialass} is certainly justified. If a forecaster $j$ uses a time series model for predictions, that is, predictions are exclusively derived from past data, then one may argue that assumption \eqref{eq:serialass} fails and the CEP and LRA tests should only be applied with respect to sets $J$ such that $j \not\in J$. It may be that some parameters of a predictive distribution are derived from past data, whereas others are from external sources such as expert opinion. Here, it could be argued that one should only regress on the latter type of parameters in the LRA tests and use a regression model in terms of these parameters for the CEP tests. A different point of view would be that the parameters based on past data are derived through a subjectively chosen model, thus after the fitting procedure they should rather be viewed as personal opinion of the forecaster than as an information influencing the outcome. We will discuss condition \eqref{eq:serialass} further in Section \ref{sec:data}.

The SRA tests, based on the score regression approach, require independent forecast-observation tuples, or, more precisely, independent sequences of realized score values $\CRPS(F_{1,n},Y_n)$ or $\DSS(F_{1,n},Y_n)$, $1 \le n \le N$, which may be a weaker requirement. They are asymptotic tests, that appear to be working well for sample sizes of at least $N=100$; see Tables \ref{tab:cross.ideal1} and \ref{tab:cross.ideal2}. The SRA test with the CRPS works only for predictive distributions from one location-scale family, whereas the SRA test with the DSS requires only that the predictive distributions have finite fourth moments. In both cases, the predictive standard deviations have to differ for at least two observations. For the forecasters of Example \ref{ex:cross-ideal}, the SRA test with the DSS showed a better power than the CEP test, so it is an interesting alternative despite the more restrictive assumptions; see Examples \ref{ex:CEP-CI} and \ref{ex:cross.ideal.test.dss}. In particular, for sample size $N=200$ the SRA test had a power of $0.963$ detecting that $F_2$ is not cross-ideal with respect to $F_1,F_2$, whereas the CEP test had a power of $0.464$.

In the case of independent forecast-observation tuples it is possible to derive a test for marginal cross-calibration by testing for mean zero in \eqref{eq:mar-cal-plot} for each $y \in \R$. It has turned out in simulations, that the resulting asymptotic test has several problems for applications. For completeness, we report these findings in Appendix \ref{sec:marg.cross.cal.test}.

\section{Data example}\label{sec:data}
The Bank of England (BoE) predicts the inflation rate of every quarter by using a probabilistic forecast with a potentially asymmetric two-piece normal distribution with parameters $\mu \in \R$ and $\sigma_1, \sigma_2 > 0$ and density
\begin{equation} \label{eq:twopiece}
  f(y) = 
  \begin{cases}
    \left(\frac{\pi}{2}\right)^{-1/2} (\sigma_1 + \sigma_2)^{-1} \exp\left( - \frac{(y- \mu)^2}{2\sigma_1^2}\right) & \text{if } y \leq \mu,\\
     \left(\frac{\pi}{2}\right)^{-1/2} (\sigma_1 + \sigma_2)^{-1} \exp\left( - \frac{(y- \mu)^2}{2\sigma_2^2}\right) & \text{if } y \geq \mu.\\   
  \end{cases}
\end{equation}
The forecasts have been issued by the BoE's Monetary Policy Committee since February 1996 for the first quarter of 1996 and are publicly available online. The first quarter is from March to May, the second quarter from June to August, and so forth. Furthermore, there are forecasts available which have been issued between February 1993 and May 1997. These were converted into density forecasts retrospectively. Until the first quarter of 2004, the forecasts have been issued to predict RPIX inflation rates. But since the first quarter of 2004, inflation has been predicted and assessed in terms of percentage changes over twelve months of the CPI. The observed RPIX as well as the CPI inflation rates are available from the Office for National Statistics under codes CDKQ and D7G7, respectively. There is no simple transformation that converts an RPIX inflation rate into a CPI inflation rate and vice versa, so we have analysed the two data sets separately; RPIX inflation rate predictions from the first quarter of 1993 to the first quarter of 2004 and CIP inflation rate predictions from the first quarter of 2004 to the first quarter of 2015. In both cases we have 45 forecast-observation tuples. For further detail on the data set, see \citet[Section 4.1]{GneitingRanjan2011}. The BoE inflation forecasts have also been previously analysed for example by \citet{Wallis2003,Clements2004,MitchellHall2005,GalbraithNorden2012}.

For both data sets, we compare the BoE predictions with a Gaussian autoregression (AR) of order one with rolling estimation window of length six quarters, which leads to Gaussian density forecasts. The prediction horizon we consider is one quarter. As discussed in Section \ref{sec:summary}, the CEP and LRA tests make use of condition \eqref{eq:serialass}. While we believe that the BoE forecasts can be assumed to satisfy \eqref{eq:serialass}, it is more debatable in the case of the AR forecasts. If one is not willing to believe that \eqref{eq:serialass} holds in this case, one should only consider the CEP and the LRA tests with respect to the empty set, that is probabilistic calibration, and cross-calibration with respect to BoE. The conclusions we can draw about the quality of the forecasts remain essentially the same. Due to the serial dependence in the data, we do not apply the SRA tests.

First, we consider the CEP tests. The results for the BoE density forecasts can be seen in Figure \ref{fig:dat1} and the ones for the AR forecasts in Figure \ref{fig:dat2}. In both plots the grid is  $z_m = \{1 + (148/149)m\}/150$ for $0 \le m \le 149$ and $20'000$ bootstrap replications are used to calculate the adjusted $p$-values under the null hypothesis. For the RPIX inflation rate forecasts in the top panel of Figure \ref{fig:dat1}, the BoE forecast seems to be probabilistically calibrated and also cross-calibrated with respect to the AR forecast. It fails to be ideal, that is cross-calibrated with respect to itself. As a theoretical consequence it also fails to be cross-calibrated with respect to both, the AR forecast and itself by Theorem \ref{thm:cross.individual}. The CEP test picks this up correctly, and rejects the null hypothesis with respect to BoE or with respect to BoE and AR for some small exceedance probabilities between zero and $0.05$. However, it should be remarked that the rejection region is small allowing the tentative conclusion that the BoE forecast is not far from being ideal or cross-ideal. For the CPI inflation rate predictions in the bottom panel of Figure \ref{fig:dat1}, the situation is different. Probabilistic calibration of the BoE forecast is rejected for exceedance probabilities between $0.13$ and $0.26$. Note that this result makes no use of assumption \eqref{eq:serialass}. Cross-calibration with respect to AR and with respect to BoE itself is also rejected in some parts of the region between $0.13$ and $0.26$. In this case, the CEP test is not able to pick up a failure of cross-calibration with respect to both, AR and BoE, although this is a theoretical consequence of the lack of probabilistic calibration by Theorem \ref{thm:cross.individual}.

\begin{figure}
  \centering
\includegraphics[scale=0.85]{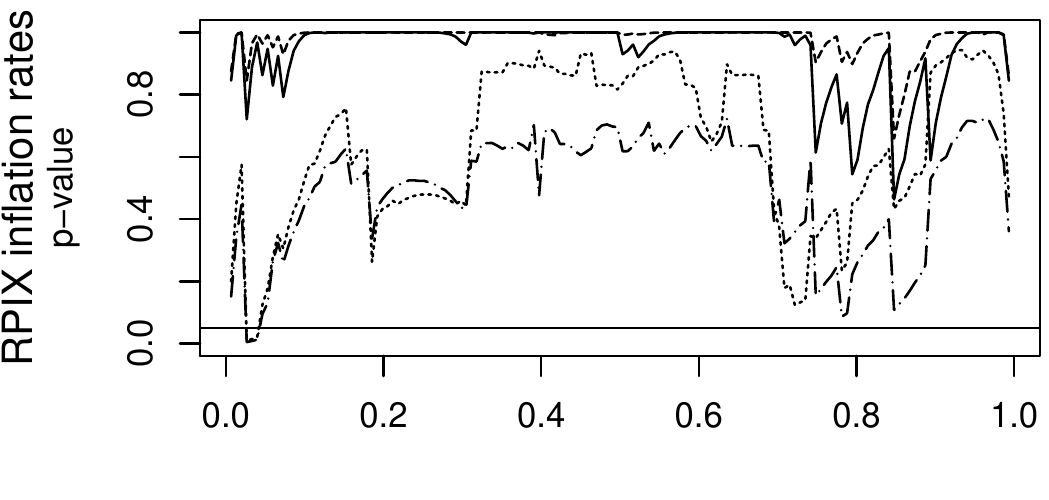}
\includegraphics[scale=0.85]{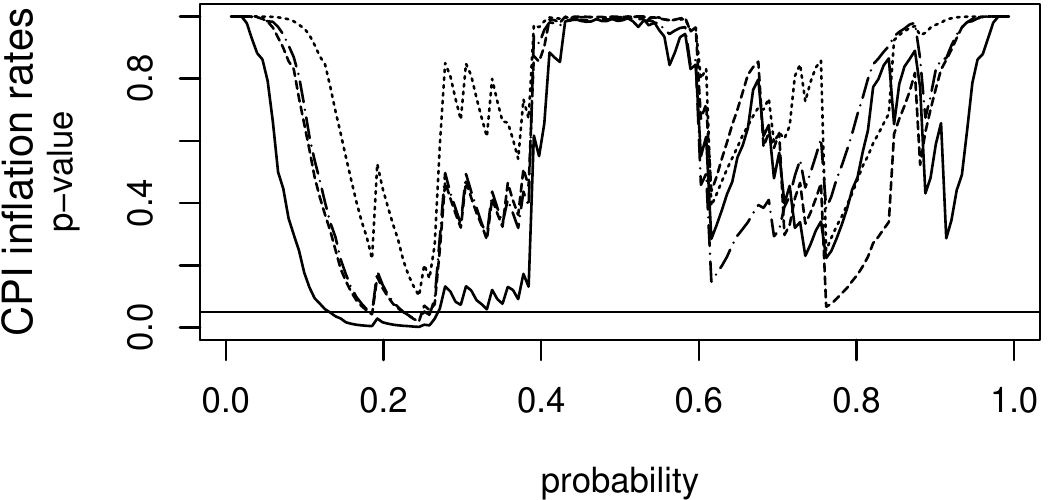}
\caption{The $p$-values of the CEP tests for the BoE forecast. The top panel corresponds to the prediction of RPIX inflation rates, whereas the bottom panel displays the results for CPI inflation rates. The solid horizontal lines give 0.05 level; the solid lines refer to probabilistic calibration (cross-calibration with respect to the empty set); the dashed lines refer to cross-calibration with respect to AR, the dash-dotted lines with respect to BoE and the dotted lines with respect to BoE and AR.\label{fig:dat1}}
\end{figure}

\begin{table}
  \centering
  \begin{tabular}{|l|cccc|}\hline
    RPIX& BoE wrt $\emptyset$ & BoE wrt AR & BoE wrt BoE & BoE wrt BoE, AR \\ \hline
    F-test & 0.338 & 0.185 & 0.010 & 0.010 \\
    AD-test & 0.496 & 0.589 & 0.822 & 0.731 \\
    adjusted & 0.676 & 0.370 & 0.021 & 0.020 \\ \hline
    CIP & BoE wrt $\emptyset$ & BoE wrt AR & BoE wrt BoE & BoE wrt BoE, AR \\ \hline
    F-test & 0.3973 & 0.5629 & 0.1486 & 0.2228\\
    AD-test & 0.0102 & 0.0122 & 0.0073 & 0.0047\\
    adjusted & 0.0203 & 0.0245 & 0.0146 & 0.0093\\ \hline
  \end{tabular}
  \caption{The $p$-values for the LRA tests for the BoE forecast.\label{tab:BoE.LRA1}}
\end{table}

\begin{figure}
  \centering
\includegraphics[scale=0.85]{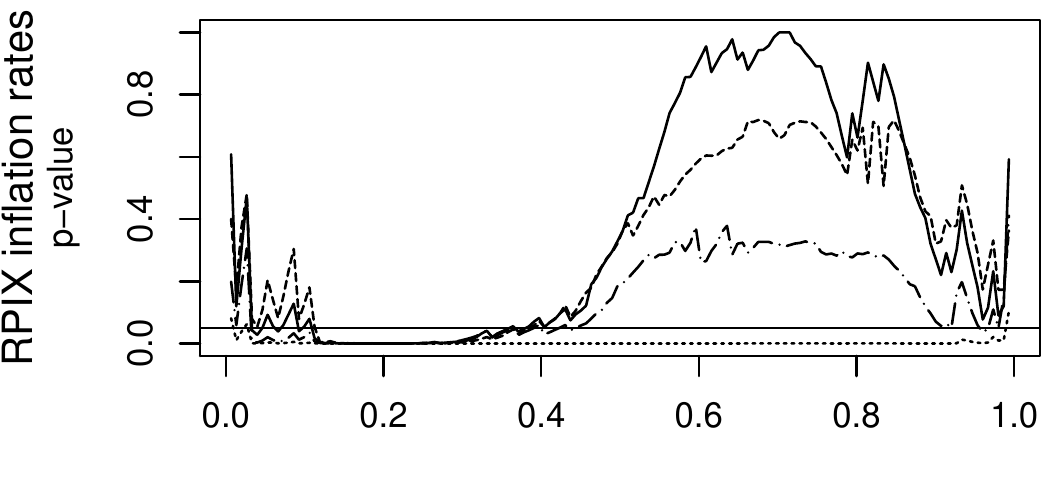}
\includegraphics[scale=0.85]{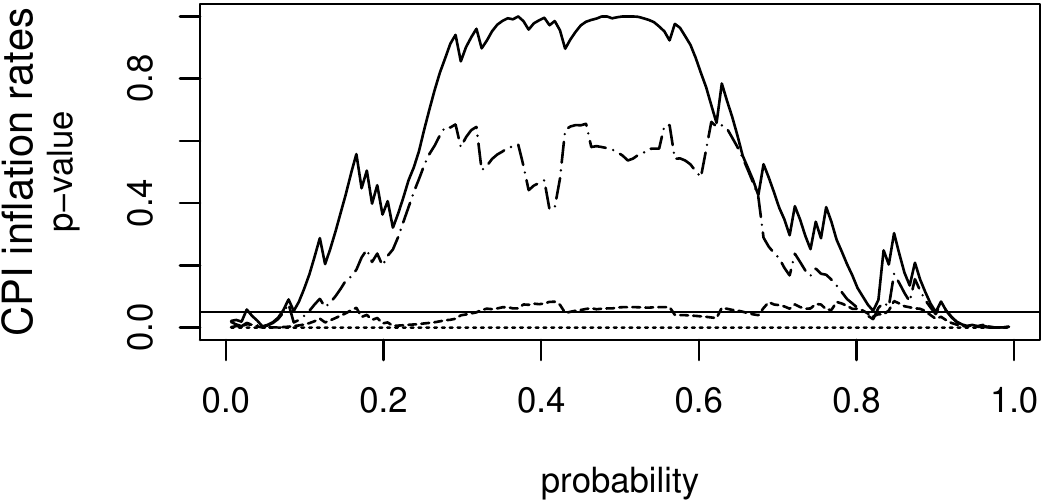}
\caption{The $p$-values of the CEP tests for the AR forecast. The top panel corresponds to the prediction of RPIX inflation rates, whereas the bottom panel displays the results for CPI inflation rates. The solid horizontal lines give the 0.05 level; the solid lines refer to probabilistic calibration (cross-calibration with respect to the empty set); the dashed lines refer to cross-calibration with respect to BoE, the dash-dotted lines with respect to AR and the dotted lines with respect to BoE and AR.\label{fig:dat2}}
\end{figure}

\begin{table}
  \centering
  \begin{tabular}{|l|cccc|}\hline
    RPIX & AR wrt $\emptyset$ & AR wrt AR & AR wrt BoE & AR wrt BoE, AR \\ \hline
    F-test & 0.193 & 0.136& 0.325 & $<$0.001\\
    AD-test & 0.003 & 0.027 & 0.025 & \phantom{$<$}0.570 \\
    adjusted & 0.006 & 0.054 &0.049 & $<$0.001 \\ \hline
    CIP & AR wrt $\emptyset$ & AR wrt AR & AR wrt BoE & AR wrt BoE, AR \\ \hline
    F-test &  0.5109 & 0.1668 & $<$0.0001 & $<$0.0001\\
    AD-test & 0.0001 & 0.0004 & \phantom{$<$}0.1784 & \phantom{$<$}0.5436\\
    adjusted & 0.0002 & 0.0008 & $<$0.0001 & $<$0.0001\\ \hline
    \end{tabular}
  \caption{The $p$-values for the LRA tests for the AR forecast.\label{tab:BoE.LRA2}}
\end{table}

According to the CEP test, the AR forecast for the RPIX inflation rate is not probabilistically calibrated and therefore also not cross-calibrated, ideal or cross-ideal; see the top panel of Figure \ref{fig:dat2}. For all tests, the forecaster fails in the region of exceedance probabilities lower than $0.4$ and near to $1$. Cross-calibration with respect to BoE and AR is rejected for all exceedance probabilities with a very low $p$-value. While the overall conclusions remain the same for the CPI inflation rate forecasts, the situation is somewhat different; see the bottom panel of Figure \ref{fig:dat2}. Cross-calibration with respect to BoE and with respect to AR and BoE is rejected for almost all exceedance probabilities. However, probabilistic calibration of the AR forecaster and cross-calibration with respect to itself is only rejected for some probabilities below $0.10$ and above $0.80$ indicating the the AR forecast might be superior to the BoE forecast for exceedance probabilities between $0.13$ and $0.26$.

Secondly, we consider is the LRA tests. The parametric class $\cF$ used for the tests is the class of two-piece normal distributions with parameters $\mu \in \R, \sigma_1 > 0, \sigma_2 > 0$ given at \eqref{eq:twopiece}. We can perform all the tests as for the CEP. The corresponding $p$-values can be found in Tables \ref{tab:BoE.LRA1} and \ref{tab:BoE.LRA2}. We also see if the estimated regression parameter failed to be zero or the standard normality assumption for the residuals was violated. The results coincide with the ones from the CEP, but we do not see in which region of exceedance probabilities the forecasters failed. On the other hand, in this application the LRA tests are consistent with Theorem \ref{thm:cross.individual} in the sense that rejection of cross-calibration with respect to a smaller set implies rejection with respect to any superset.


\section{Discussion}\label{sec:discussion}
We have extended the prediction space setting of \citet{GneitingRanjan2013} to accommodate serially dependent forecasts which are commonly encountered in practice. For prediction spaces with serial dependence, we have shown a refined version of the result of \citet{DieboldGuntherETAL1998} on uniformity and independence of PIT values. It relies on condition \eqref{eq:serialass}, whose implications should be studied in greater detail. We have focussed on the case of one period ahead forecasts like in the original result. As mentioned in Remark \ref{rem:qstep}, an analogous result continues to hold for $q$-step ahead forecasts. However, additional complications arise in testing for cross-calibration, which need further investigation in future research.

We have refined the notions of calibration to notions of \emph{cross}-calibration and we have provided powerful statistical tests for these properties requiring minimal assumptions on the sequences of forecasts and observations. The characterization of cross-calibration and cross-ideal forecasters in Proposition \ref{prop:pccci} sheds some light on the difference between ideal forecasters and probabilistically calibrated forecasters as discussed in \citet{GneitingRanjan2013}. It is remarkable to note that with our approaches, testing for ideal forecasters is not more difficult than testing for probabilistic calibration, contrary to the doubts voiced in \citet{GneitingRanjan2013}.

In order to optimize forecasting performance, it is natural to combine forecasts. \citet{GneitingRanjan2013} have proposed combination formulas and aggregation methods to combine several forecasters; see also \citet{RanjanGneiting2010}. It would be interesting to consider under which conditions, calibrated forecasters can be combined to yield cross-calibrated forecasts. Also, the more refined notions of cross-calibration in this paper, may help to identify which forecasters to include in combination formulas and which ones do not add additional information about the future outcome. Finally, combining forecasts is only a good idea if the predictions are based on different information sets. If there is a cross-calibrated forecaster with respect to all forecasters, any combination of forecasts would compromise on forecast quality. 

Our approach may add another perspective on the concerns raised by \citet{MitchellWallis2011} concerning the principle to ``Maximize sharpness subject to calibration'' formulated by \citet{MurphyWinkler1987,GneitingRaftery2007}. In fact, the concept of cross-calibration allows to assess the statistical compatibility of several forecasters with the observations. When considering calibration and sharpness as suggested by \citet{GneitingRaftery2007}, calibration concerns the interplay of one forecaster and the observation, whereas sharpness compares forecasters but makes no reference to observations. Possibly, the formulated guiding principle should be modified to ``Maximize sharpness subject to cross-calibration'' which is a stronger requirement in terms of calibration and therefore gives somewhat less importance to sharpness.

\section*{Acknowledgements}
The authors would like to thank Tilmann Gneiting for suggesting the data example. 

\bibliographystyle{plainnat}
\bibliography{Literature}

\appendix

\section{Calculations for Example \ref{ex:marginal.calibration}}
\label{appendix:marg.cross.cal}

Let $\mu \sim \cN(0,1)$ and let $\tau$ takes values $1$ or $-1$ with equal probability independent of $\mu$. Conditional on $\mu$ and $\tau$, the observation is $Y \sim \cN(\mu,1)$ and the forecasters have the following predictive distribution functions:
\begin{align*}
  F_1(y) &= \Phi(y - \mu),\\
  F_2(y) &= \Phi\left(\frac{y}{\sqrt{2}}\right),\\
  F_3(y) &= \frac{1}{2} \Phi(y - \mu) + \frac{1}{2} \Phi(y- \mu
  -\tau),\\
  F_4(y) &= \Phi(y+\mu)
\end{align*}
for $y \in \mathbb{R}$.  As in \citet{GneitingBalabdaouETAL2007}, we use the definitions $\Psi_+(x) = \frac{1}{2} \{\Phi(x) + \Phi(x-1)\}$, $\Psi_-(x) = \frac{1}{2} \{\Phi(x) + \Phi(x+1)\}$. Thus, $\Psi_-(x) = \Psi_+(x+1)$ and $\Psi_-^{-1}\{\Psi_+(x+1)\} = x$. 
\begin{prop}
The unfocused forecaster $F_3$ is probabilistically cross-calibrated with respect to $F_1,F_2,F_4$.
\end{prop}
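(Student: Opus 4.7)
\medskip\noindent\textbf{Proof plan.} The plan is to reduce probabilistic cross-calibration of $F_3$ with respect to $F_1,F_2,F_4$ to a concrete statement about the conditional law of $Z_3$ given $\mu$, and then verify it by an explicit calculation exploiting the identity $\Psi_-(x)=\Psi_+(x+1)$.

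First, I would observe that $F_2$ is a deterministic CDF (carrying no information) and that both $F_1$ and $F_4$ are measurable with respect to $\sigma(\mu)$, while $\mu$ is in turn recoverable from either $F_1$ or $F_4$ (each is a location family indexed by $\mu$). Hence $\sigma(F_1,F_2,F_4)=\sigma(\mu)$, and probabilistic cross-calibration of $F_3$ with respect to $F_1,F_2,F_4$ is equivalent to showing
\[
\mathcal{L}(Z_3\mid\mu)=\mathcal{U}([0,1])\quad \text{almost surely.}
\]
Because $F_3$ is continuous, the auxiliary uniform $V$ drops out and $Z_3=F_3(Y)=\tfrac12\Phi(Y-\mu)+\tfrac12\Phi(Y-\mu-\tau)$.

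Next I would change variables: conditional on $\mu$, write $X:=Y-\mu$. From the model $Y\mid\mu,\tau\sim\mathcal{N}(\mu,1)$ and the independence of $\tau$ from $(\mu,Y-\mu)$, it follows that conditionally on $\mu$, $X\sim\mathcal{N}(0,1)$ and $\tau\in\{-1,+1\}$ is independent of $X$ with $\mathbb{Q}(\tau=1)=\mathbb{Q}(\tau=-1)=1/2$. Then
\[
Z_3=\tfrac12\Phi(X)+\tfrac12\Phi(X-\tau)=\begin{cases}\Psi_+(X),&\tau=+1,\\ \Psi_-(X),&\tau=-1.\end{cases}
\]

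For the main step I would compute the conditional CDF of $Z_3$ given $\mu$. For $z\in(0,1)$,
\[
\mathbb{Q}(Z_3\le z\mid\mu)=\tfrac12\,\mathbb{Q}\{\Psi_+(X)\le z\}+\tfrac12\,\mathbb{Q}\{\Psi_-(X)\le z\}=\tfrac12\Phi\bigl(\Psi_+^{-1}(z)\bigr)+\tfrac12\Phi\bigl(\Psi_-^{-1}(z)\bigr).
\]
Here is the one small identity to invoke: from $\Psi_-(x)=\Psi_+(x+1)$ we get $\Psi_+^{-1}(z)=\Psi_-^{-1}(z)+1$. Setting $w:=\Psi_+^{-1}(z)$, the right-hand side becomes $\tfrac12\Phi(w)+\tfrac12\Phi(w-1)=\Psi_+(w)=z$. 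Thus $\mathcal{L}(Z_3\mid\mu)=\mathcal{U}([0,1])$, which is the desired cross-calibration statement.

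The step I expect to be the genuine content is the $\sigma$-algebra identification $\sigma(F_1,F_2,F_4)=\sigma(\mu)$ together with the mixture-CDF algebra in the last display; everything else is bookkeeping. No measurability subtleties arise because $\mu$ is a single real-valued random variable and all CDFs involved are continuous and strictly increasing on $\mathbb{R}$, so the inverses $\Psi_\pm^{-1}$ are well-defined on $(0,1)$.
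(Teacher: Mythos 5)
Your proof is correct and follows essentially the same route as the paper's: reduce the conditioning on $F_1,F_2,F_4$ to conditioning on $\mu$, split over the two equally likely values of $\tau$ to write $Z_3$ as $\Psi_+(Y-\mu)$ or $\Psi_-(Y-\mu)$, and use the identity $\Psi_-(x)=\Psi_+(x+1)$ to show $\tfrac12\Phi\{\Psi_+^{-1}(z)\}+\tfrac12\Phi\{\Psi_-^{-1}(z)\}=z$. The only difference is that you spell out the $\sigma$-algebra identification $\sigma(F_1,F_2,F_4)=\sigma(\mu)$, which the paper leaves implicit.
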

\begin{proof}
Let $y \in (0,1)$. We have
\begin{align*}
\mathbb{Q}(Z_{F_3} \le y | F_1, F_2, F_4) &= \mathbb{Q}(Z_{F_3} \le y | \mu) \\
&= \frac{1}{2} \mathbb{Q}\{\Psi_+(Y-\mu) \le y | \mu\} + \frac{1}{2} \mathbb{Q}\{\Psi_-(Y-\mu) \le y | \mu\}\\
&= \frac{1}{2} \Phi\{\Psi_+^{-1}(y)\} + \frac{1}{2}\Phi\{\Psi_-^{-1}(y)\} = y.
\end{align*}
\end{proof}

\section{Testing for marginal cross-calibration}
\label{sec:marg.cross.cal.test}

We consider two forecasters $F_1$ and $F_2$ within the prediction space
setting. Our interest lies in
\begin{displaymath}
  S(y) = F_2(y) - \one\{F_2^{-1}(Z_{F_1}^Y) \le y\}, \quad y \in \supp(Y),
\end{displaymath}
where $\supp(Y)$ denotes the support of of the observation $Y$. We would like to test if $\E_{\Q}{S(y)} = 0$ for all $y \in \supp(Y)$, because
this is equivalent to marginal cross-calibration of  $F_1$ with respect to $F_2$; cf.~Definition \ref{def:cross.calibration}. 

We suppose that we have $N$ independent and identically distributed observations $(F_{1,n},F_{2,n},Y_n,V_n)$ for $1 \le n \le N$ in a prediction space and define for each $n$,
\begin{displaymath}
  S_n(y) = F_{2,n}(y) - \one\{F_{2,n}^{-1}(Z_{1,n}) \le y\}, \quad y \in \supp(Y).
\end{displaymath}
We pick a sequence $y_0 < y_1 < \ldots < y_m$ in the support
of $Y$ and define
\begin{displaymath}
  \mathbf{S_n} = (S_n(y_0),S_n(y_1), \ldots, S_n(y_m))^T,
\end{displaymath}
 and $\mathbf{\bar{S}_N} = (1/N)\sum_{n=1}^N \mathbf{S_n}$. Let $\Sigma_N = (1/N)\sum_{n=1}^N (\mathbf{S_n} - \mathbf{\bar{S}_N})(\mathbf{S_n} - \mathbf{\bar{S}_N})^T$ be the sample covariance matrix. 
If $F_1$ is marginally cross-calibrated with respect to $F_2$, then $\mathbb{E}(\mathbf{S_n}) = 0$. Therefore, by standard arguments of probability theory, the test statistic $T = N\, \mathbf{\bar{S}_N}^T \Sigma_N^{-1} \mathbf{\bar{S}_N}$ converges in distribution to $\chi^2_m$, a chi-squared distribution with $m$ degrees of freedom. 

We test the null hypothesis that $\mathbb{E}_{\mathbb{Q}}S(y) = 0$ for all $y \in \supp(Y)$ for one particular finite distribution of $S(y)$. Therefore, the sequence $y_1, \ldots, y_m$ has to be chosen carefully. Simulations indicate that the level and power of the test is not much affected by the choice of $y_1,\dots,y_m$. However, for small sample sizes the number of grid points $m$ should be rather small, otherwise the sample covariance matrix may be singular and can not be inverted to compute the test statistic. Another reason for singularity of the sample covariance matrix for small sample sizes may be the choice of a grid point $y$  such that the probability that $F_i^{-1}\{Z_{F_{j}}^Y\} \le y$ is small. Unfortunately, for an individual test case, different choices of $y_1,\dots,y_m$ may lead to completely different $p$-values, which makes the test useless in practice. We illustrate these effects in the following example.

\begin{ex}
  \label{ex:marg.cross.cal.test}
We consider the forecasters $F_1,\ldots,F_4$ and the observation $Y$ from Example \ref{ex:marginal.calibration}. 
Let $N=500$ be the number of observations from $(F_i,F_j, Y)$, for each pair $F_i$ and $F_j$ with $1 \le i,j \le 4$. The results in Table \ref{tab:marg.cross.cal.test1} show that the marginal cross-calibration test performs well overall, and the performance is relatively unaffected by the choice of different grid points $y_1, y_2, \ldots, y_m$. However, if we consider an increasing number of grid points for the same data set the $p$-value changes substantially. This is illustrated in Figure \ref{fig:pvaluesGn} for five different simulated data sets with $N = 500$ and the null hypothesis that $F_3$ is marginally cross-calibrated with respect to $F_4$.
\end{ex}

\begin{table}
\begin{center}
\begin{tabular}{|c|cccc|}\hline
$m=9$ & $F_1$ & $F_2$ & $F_3$ & $F_4$\\\hline
$F_1$ & 0.066 & 0.0694 & 0.0655 & 0.0622\\
$F_2$ & 1 & 0.0671 & 1 & 1\\
$F_3$ & 0.0689 & 0.0708 & 0.5782 & 0.0597\\
$F_4$ & 1 & 1 & 1 & 0.0668\\\hline\hline
$m=4$ & $F_1$ & $F_2$ & $F_3$ & $F_4$\\\hline
$F_1$ & 0.0556 & 0.0545 & 0.0511 & 0.0578\\
$F_2$ & 1 & 0.0555 & 0.9972 & 1\\
$F_3$ & 0.0531 & 0.0534 & 0.5122 & 0.0563\\
$F_4$ & 1 & 1 & 1 & 0.0554\\\hline\hline
$m=3$ & $F_1$ & $F_2$ & $F_3$ & $F_4$\\\hline
$F_1$ & 0.0526 & 0.0568 & 0.0519 & 0.0545\\
$F_2$ & 0.9993 & 0.0524 & 0.986 & 1\\
$F_3$ & 0.0532 & 0.0566 & 0.431 & 0.0586\\
$F_4$ & 1 & 1 & 1 & 0.0521\\\hline
\end{tabular}
  \caption{Monte-carlo powers of the marginal cross-calibration tests for sample size $N=500$
    and grid points $(y_1, y_2, \ldots, y_9) = (-1.81, -1.19, -0.74, -0.36,  0.00,
    0.36,  0.74,  1.19,  1.81)$, $(y_1, y_2, y_3, y_4) = (-1.19, -0.35,0.35,1.19)$, $(y_1, y_2, y_3) = (-0.95,0,0.95)$, respectively, for the first, second and third table. The value in the $i$-th row and $j$-th column is the percentage of rejections of the null hypothesis that $F_i$ is marginally cross-calibrated with respect to $F_j$ at level $\alpha = 0.05$ for the forecasters $F_1, F_2, F_3$ and $F_4$ in Example \ref{ex:marg.cross.cal.test} in 10'000 simulations.\label{tab:marg.cross.cal.test1}}
\end{center}
\end{table}

\begin{figure}
  \centering
\includegraphics[width = 0.9\textwidth]{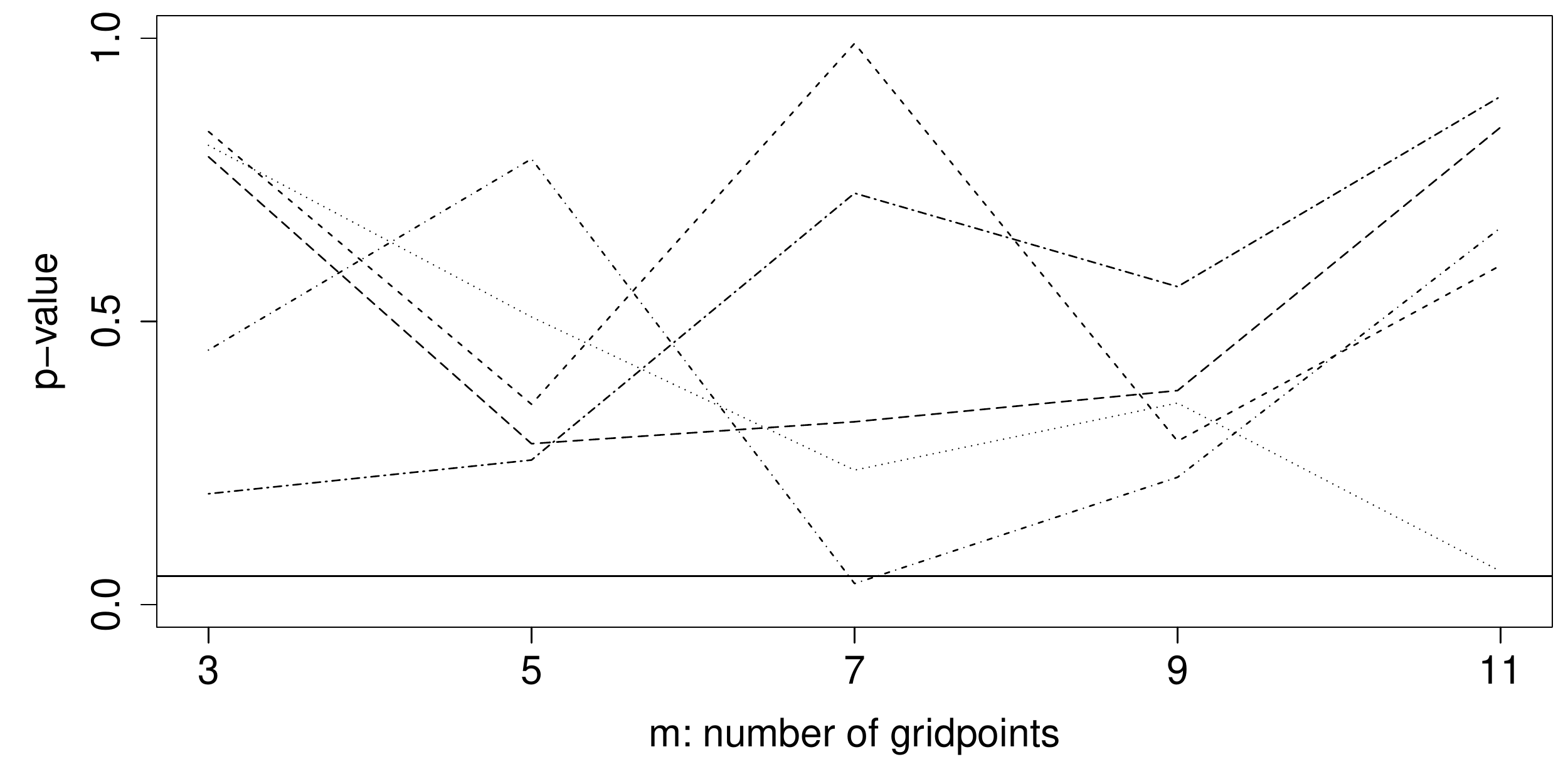}
\caption{The $p$-values of the marginal cross-calibration test for five different simulated data sets and an increasing number of grid points. The horizontal line marks the $\alpha=0.05$ significance level.\label{fig:pvaluesGn}}
\end{figure}

\end{document}